\newcommand{\ket}[1]{\left| #1 \right>}
\newcommand{\bra}[1]{\left< #1 \right|}
\newcommand{\braket}[2]{\left< #1 | #2 \right>}
\newcommand{\ketbra}[2]{\ket{#1} \bra{#2}}
\theoremstyle{plain}
\newtheorem{theorem}{Theorem}[section]
\newtheorem{proposition}[theorem]{Proposition}
\newtheorem{corollary}[theorem]{Corollary}
\theoremstyle{definition}
\newtheorem{definition}[theorem]{Definition}
\tikzstyle{every picture}=[baseline=-0.25em]
\tikzstyle{dotpic}=[scale=0.6]
\tikzstyle{dot graph}=[shorten <=-0.1mm,shorten >=-0.1mm,scale=0.6]
\tikzstyle{digraph}=[-latex]
\tikzstyle{none}=[inner sep=0mm]
\tikzstyle{every loop}=[]
\tikzstyle{SLClassical}=[regular polygon,regular polygon sides=3,draw=black,fill=white,scale=2]
\tikzstyle{SLQuantum}=[circle,fill=white,draw=black,scale=3]
\tikzstyle{dot}=[inner sep=0.5mm,fill=black,draw=black,shape=circle]
\tikzstyle{ddot}=[minimum size=1em,inner sep=-0.5mm,fill=black,draw=black,shape=circle]
\tikzstyle{white dot}=[dot,fill=white]
\tikzstyle{gray dot}=[dot,fill=gray!50]
\tikzstyle{box vertex}=[draw=black,rectangle]
\tikzstyle{green dot}=[dot,fill=green]
\tikzstyle{red dot}=[dot,fill=red]
\tikzstyle{green ddot}=[ddot,fill=green]
\tikzstyle{red ddot}=[ddot,fill=red]
\tikzstyle{bayesian dot}=[dot,fill=black]
\tikzstyle{square box}=[rectangle,fill=white,draw=black,minimum height=6mm,minimum width=6mm]
\tikzstyle{square gray box}=[rectangle,fill=gray!30,draw=black,minimum height=6mm,minimum width=6mm]
\tikzstyle{point}=[regular polygon,regular polygon sides=3,draw=black,scale=0.75,inner sep=-0.5pt]
\tikzstyle{copoint}=[point,regular polygon rotate=180]
\tikzstyle{open graph}=[baseline=-0.25em]
\tikzstyle{greybg}=[background rectangle/.style={fill=black!5,draw=black!30,rounded corners=1ex}, show background rectangle]
\tikzstyle{edge point}=[circle,minimum width=1mm,fill=black,inner sep=0mm]
\tikzstyle{vertex point}=[circle,minimum width=4mm,fill=white,draw=black,inner sep=0mm]
\tikzstyle{edge label}=[inner sep=2pt, font=\small]
\tikzstyle{on edge label}=[fill=white, font=\footnotesize, inner sep=1 pt]
\newcommand{\edgetick}{{\arrow[black,scale=0.5,thick]{|}}}
\tikzstyle{white edge}=[line width=5pt,white]
\tikzstyle{tick}=[postaction=decorate,decoration={markings, mark=at position 0.5 with \edgetick}]
\tikzstyle{map edge}=[|-latex,very thick, gray!40, shorten <=1mm, shorten >=0.5mm]
\newcommand{\edgearrow}{{\arrow[black]{>}}}
\newcommand{\edgearrowback}{{\arrow[black]{<}}}
\tikzstyle{diredge}=[postaction=decorate,decoration={markings, mark=at position 0.55 with \edgearrow}]
\tikzstyle{diredgeback}=[postaction=decorate,decoration={markings, mark=at position 0.55 with \edgearrowback}]
\tikzstyle{small dotpic}=[dotpic,scale=0.6]
\tikzstyle{small}=[inner sep=0.4mm]
\tikzstyle{small dot}=[small,dot]
\tikzstyle{small white dot}=[small dot,fill=white]
\tikzstyle{small gray dot}=[small dot,fill=gray!50]
\tikzstyle{cnot}=[fill=white,shape=circle,inner sep=-1.4pt]
\tikzstyle{bang box}=[draw=black,dashed,minimum height=12mm,minimum width=12mm,fill=gray!20]
\tikzstyle{wire label}=[font=\footnotesize, auto]
\newcommand{\drawFrontCube}[1]{
    \draw[thick] (0,0,0) -- (#1,0,0) -- (#1,#1,0) ;
    \draw[thick] (0,0,#1) -- (#1,0,#1) -- (#1,#1,#1) -- (0,#1,#1) -- cycle;
    \draw[thick] (0,0,#1) -- (0,0,0);
    \draw[thick] (#1,0,#1) -- (#1,0,0);
    \draw[thick] (#1,#1,#1) -- (#1,#1,0);
}
\newcommand{\drawBackCube}[1]{
    \draw[thick] (#1,#1,0) -- (0,#1,0) --(0,0,0);
    \draw[thick] (0,#1,#1) -- (0,#1,0);
}
\newcommand{\pointAt}[4]{
    \draw[#4,fill=#4, opacity=0.5] (#1,#2,#3) -- (1+#1,#2,#3) -- (1+#1,1+#2,#3) -- (#1,1+#2,#3) -- cycle ;
    \draw[#4,fill=#4, opacity=0.5] (#1,1+#2,#3) -- (1+#1,1+#2,#3) -- (1+#1,1+#2,1+#3) -- (#1,1+#2,1+#3) -- cycle ;
    \draw[#4,fill=#4, opacity=0.5] (#1,#2,#3) -- (#1,1+#2,#3) -- (#1,1+#2,1+#3) -- (#1,#2,1+#3) -- cycle ;
    
    \draw[#4,fill=#4, opacity=0.5] (#1,#2,1+#3) -- (1+#1,#2,1+#3) -- (1+#1,1+#2,1+#3) -- (#1,1+#2,1+#3) -- cycle ;
    \draw[#4,fill=#4, opacity=0.5] (#1,#2,#3) -- (1+#1,#2,#3) -- (1+#1,#2,1+#3) -- (#1,#2,1+#3) -- cycle ;
    \draw[#4,fill=#4, opacity=0.5] (#1+1,#2,#3) -- (1+#1,1+#2,#3) -- (1+#1,1+#2,1+#3) -- (1+#1,#2,1+#3) -- cycle ;
}
  \newcommand{\drawInnerDivider}{
    \draw[dotted] (0,0,2) -- (4,0,2);
    \draw[dotted] (0,2,2) -- (4,2,2);
    \draw[dotted] (0,4,2) -- (4,4,2);
    
    \draw[dotted] (2,0,0) -- (2,4,0);
    \draw[dotted] (0,2,0) -- (4,2,0);
    \draw[dotted] (2,0,4) -- (2,4,4);
    \draw[dotted] (0,2,4) -- (4,2,4);

    \draw[dotted] (2,0,0) -- (2,0,4);
    \draw[dotted] (0,2,0) -- (0,2,4);
    \draw[dotted] (2,2,0) -- (2,2,4);
    \draw[dotted] (4,2,0) -- (4,2,4);
    \draw[dotted] (2,4,0) -- (2,4,4);
    
    \draw[dotted] (0,0,2) -- (0,4,2);
    \draw[dotted] (2,0,2) -- (2,4,2);
    \draw[dotted] (4,0,2) -- (4,4,2);
}
    \newcommand{\drawNKLCube}[2]{
 \tdplotsetmaincoords{55}{15}
       \begin{figure}
        \caption{#2}
\centering
  \begin{tikzpicture}[scale=1,tdplot_main_coords]
	\drawInnerDivider
	\drawBackCube{4}
	#1
	\drawFrontCube{4}
  \end{tikzpicture}
  \end{figure}
  
  }
   \newcommand{\drawSmallNKLCubes}[3]{
 \tdplotsetmaincoords{55}{15}
       \begin{figure}
        \caption{#3}
\centering
\begin{tabular}{cc}
  \begin{tikzpicture}[scale=.8,tdplot_main_coords]
	\drawInnerDivider
	\drawBackCube{4}
	#1
	\drawFrontCube{4}
  \end{tikzpicture}
  &
  \begin{tikzpicture}[scale=.8,tdplot_main_coords]
	\drawInnerDivider
	\drawBackCube{4}
	#2
	\drawFrontCube{4}
  \end{tikzpicture}
\\ (a) & (b)
\end{tabular}
  \end{figure}
  
  }
\begin{document}

\title{Consequences and applications of the completeness of Hardy's nonlocality
}

\author{Shane Mansfield}
\email[]{smansfie@staffmail.ed.ac.uk}
\affiliation{School of Informatics, University of Edinburgh, Informatics Forum, 10 Crichton Street, Edinburgh EH8 9AB, United Kingdom}
\date{\today}


\begin{abstract}
Logical nonlocality is completely characterised by Hardy's ``paradox'' in $(2,2,l)$ and $(2,k,2)$ scenarios. We consider a variety of consequences and applications of this fact. \begin{enumerate*}[label=(\roman*)]  \item Polynomial algorithms may be given for deciding logical nonlocality in these scenarios. \item Bell states are the only entangled two-qubit states which are not logically nonlocal under projective measurements. \item It is possible to witness Hardy nonlocality with certainty in a simple tripartite quantum system.
\item Non-commutativity of observables is necessary and sufficient for enabling logical nonlocality. \end{enumerate*}
\end{abstract}

\maketitle

\section{Introduction}

Since the fundamental insight of Bell~\cite{bell:64,bell:87}, it is known that quantum mechanics gives rise to stronger-than-classical, nonlocal correlations. Under seemingly natural assumptions of locality and realism, it can be shown that any empirical correlations should satisfy certain Bell inequalities, which can be violated quantum-mechanically, from which Bell's conclusion follows.

A more intuitive, logical approach to nonlocality proofs was pioneered by Heywood and Redhead \cite{redhead:83}, Greenberger, Horne, Shimony and Zeilinger~\cite{greenberger:89,greenberger:90} and Hardy~\cite{hardy:92,hardy:93} \footnote{See also Mermin's versions \cite{mermin:90a,mermin:90,mermin:94}. Several other logical nonlocality proofs have also appeared (e.g.~\cite{boschi:97,cereceda:04, ghosh:98}).}. This kind of nonlocality proof disregards the precise values of the probabilities for the various outcome events and only refers to events as being possible (with probability greater than zero) or impossible (having probability zero). This turns out to be sufficient for demonstrating nonlocality in quantum mechanics. We refer to these as \emph{logical nonlocality} proofs.

Probabilistic nonlocality, as witnessed by violations of Bell inequalities and logical nonlocality are the first two levels of a \emph{qualitative hierarchy} of nonlocality introduced in \cite{abramsky:11} \footnote{Note that the hierarchy emerges from the unified sheaf-theoretic framework for nonlocality and contextuality and so more generally applies to contextuality. Further refinements of the hierarchy may also be found in \cite{abramsky:11a} and \cite{abramsky:15contextuality}, and a related quantitative measure is also considered in \cite{abramsky:11,mansfield:13t,abramsky:16qpl,abramsky:16horse}.}, the highest level of which is \emph{strong} nonlocality, which arises when even at the level of possibilities the model cannot be factored into a local and a nonlocal part.

We work within a general framework, introduced in \cite{mansfield:11}, for logical nonlocality proofs in \emph{$(n,k,l)$ scenarios}---i.e., Bell-type scenarios in which $n$ is the number parties or sites, $k$ is the maximum number of measurement settings available at each site, and $l$ is the maximum number of potential outcomes for these measurements.
Our framework bears some similarity to the relational hidden variable framework of Abramsky~\cite{abramsky:10}, as well as a combinatorial framework due to Degorre and Mhalla \cite{degorre:10framework}, and while not as general could be considered a precursor to the unified sheaf-theoretic \cite{abramsky:11} and combinatorial \cite{acin:15} frameworks for nonlocality and contextuality \footnote{E.g.~The present framework corresponds to a purely \emph{possibilistic} version \cite{abramsky:16} of the sheaf-theoretic framework for Bell-type scenarios.}.
The advantage of the present framework is that it comes with a particular representation for $n=2$ and (as we will introduce in this article) $n=3$ scenarios, which can provide a powerful means of reasoning about \emph{empirical models}; i.e., probability or possibility tables for the various joint outcomes in a given scenario.

Hardy's logical nonlocality proof or ``paradox'' \cite{hardy:92,hardy:93} is often considered to be the simplest of all quantum mechanical nonlocality proofs. In \cite{mansfield:11}, the author and Fritz proved completeness results which establish that Hardy's paradox is a necessary and sufficient condition for logical nonlocality in all $(2,k,2)$ and $(2,2,l)$ scenarios (thereby subsuming all other logical nonlocality proofs or ``paradoxes'' in these scenarios). For the $(2,3,3)$ \cite{mansfield:11} and $(3,2,2)$ \cite{mansfield:13t} scenarios, it is known that this no longer holds.

In this article, we explore a variety of consequences and applications of the completeness of Hardy nonlocality. To begin with, we will see that in the relevant scenarios they lead to explicit algorithms for deciding logical nonlocality which are polynomial in $l$ and $k$.
They also lead to a constructive proof that the Popescu-Rohrlich box \cite{popescu:94} is the only \emph{strongly nonlocal} $(2,2,2)$ empirical model.

Next, we obtain a proof that Bell states are not logically nonlocal under projective measurements. Surprisingly, these are the \emph{only} entangled qubit states with this property: all other entangled two-qubit states have been shown to admit a Hardy paradox \cite{hardy:93}, and all entangled $n$-qubit states have also been shown to be logically nonlocal \cite{abramsky:16hardy}, both via appropriate choices of local projective measurements. In this sense, the Bell states are anomalous in the landscape of entangled states, in spite of the fact that they are among the most studied and utilised of these.

Much of the literature on Hardy's paradox is concerned with the \emph{paradoxical probability}; i.e., the probability of witnessing the particular outcome event from which the logical argument follows. This is often considered to be an indicator of the quality of Hardy nonlocality. For Hardy's family of quantum mechanical, nonlocal empirical models, the maximum paradoxical probability that can be achieved is $(5 \sqrt{5}-11)/2 \approx 0.09$. It has been shown, however, that it is possible to achieve a paradoxical probability of $0.125$ for a generalised version of Hardy's paradox in a tripartite quantum system \cite{ghosh:98}, and
it has also been shown that a ``ladder'' version of Hardy's paradox, which allows $k$ measurement settings to each party, can give rise to a higher paradoxical probability which approaches $0.5$ as $k \rightarrow \infty$.

More recently, Chen \textit{et al.}~found that another generalisation of Hardy's paradox can be witnessed with probability $\approx 0.4$ for certain high-dimensional bipartite quantum systems \cite{chen:13}. The measurement scenarios for these logical nonlocality proofs fall within the scope of the completeness results for Hardy nonlocality. We show explicitly that each Chen \textit{et al.}~paradox contains within it many different Hardy paradoxes. Moreover, we will see that their ``paradoxical probability'' might more accurately be described as the sum of the paradoxical probabilities for these Hardy paradoxes, all of which occur within the one model.

Using the completeness of Hardy nonlocality we will achieve a rather comprehensive improvement on these results, demonstrating by a much simpler argument that if such a summing of paradoxical probabilities is considered, it is possible to witness Hardy nonlocality \emph{with certainty} for a tripartite quantum system. Interestingly, the argument relies on the same state and measurements as the Greenberger-Horne-Zeilinger (GHZ) experiment \cite{greenberger:90}.
We also show that Hardy nonlocality can be achieved with certainty for a particular non-quantum, no-signalling $(2,2,2)$ model, which turns out to be the Popescu-Rohrlich no-signalling box \cite{popescu:94}.

Moreover, the notion of witnessing logical nonlocality with certainty corresponds precisely to the notion of strong nonlocality, the highest level in the qualitative hierarchy of nonlocality (the hierarchy also applies more generally to contextuality) introduced in \cite{abramsky:11}.

Finally, we employ the completeness results in order to prove that incompatibility of observables is necessary and sufficient for logical nonlocality, thus extending to the logical setting a result due to Wolf, Peres-Garcia and Fernandez \cite{wolf:09} which establishes that incompatibility is necessary and sufficient for (probabilistic) nonlocality.

\section{Logical Nonlocality and Hardy Paradoxes}\label{sec:background}

The possibility table (or \emph{possibilistic empirical model}) used for the original Hardy nonlocality proof, Table \ref{hptable} (a), concerns the $(2,2,2)$ scenario. Each of the two parties can make one of two measurements on their subsystem, giving rise to outcomes which we label here $\{ \uparrow, \downarrow\}$ for the first measurement and $\{R,G\}$ for the second. A $1$ in the table signifies that it is possible (with probability greater than zero) to obtain the corresponding joint outcome, and a $0$ signifies that it is not possible. The precise probabilities of obtaining the various joint outcomes are not required to prove the nonlocality of the model.
Any probabilistic empirical model can be transformed into a possibilistic empirical model of this kind in a canonical way via \emph{possibilistic collapse} \cite{mansfield:11,abramsky:11}: the process by which all non-zero probabilities are conflated to $1$, with zero probabilities mapping to $0$.

\begin{definition}
Any empirical model which is nonlocal at the level of its possibilistic table is said to be \emph{logically nonlocal}.
\end{definition}

\begin{proposition}[\cite{mansfield:11}]
A possibilistic empirical model is (logically) nonlocal if and only if it cannot be realised as a union of local deterministic models; or, equivalently, if there exists a $1$ in its possibility table which cannot be completed to a deterministic grid.
\end{proposition}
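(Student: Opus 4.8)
The plan is to unwind the definition of locality at the possibilistic level and exploit the fact that, because the possibilistic semiring is Boolean, a ``possibilistic mixture'' of deterministic models is nothing but their union. Call a function $g$ that assigns an outcome to every measurement at every site a \emph{global assignment}, write $g|_C$ for the joint outcome it induces on a measurement context $C$ (a choice of one measurement per site), and let $\delta_g$ denote the local deterministic model whose table carries a $1$ at $(C, g|_C)$ for each $C$ and a $0$ elsewhere. For a set $S$ of global assignments, the union $\bigcup_{g \in S} \delta_g$ has, at context $C$, a $1$ exactly at the outcomes in $\{\, g|_C : g \in S \,\}$. On the other hand, a possibilistic model $e$ is local (non-contextual) precisely when its table is the possibilistic marginal of some possibility distribution on global assignments; over the Boolean semiring such a distribution is just a nonempty set $S$ of assignments, with marginal $\{\, g|_C : g \in S \,\}$ at each $C$. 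Thus $e$ is local if and only if $e = \bigcup_{g \in S}\delta_g$ for some $S$, i.e.\ if and only if $e$ is realisable as a union of local deterministic models; this is the first stated equivalence, and it forces each $\delta_g$ in the union to be ``seen'' by $e$, so that it introduces no new $1$.

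For the second formulation, call a global assignment $g$ a \emph{deterministic grid for $e$} if $g|_C$ carries a $1$ of $e$ for every context $C$, and say that such a grid \emph{completes} a given $1$ at $(C, s)$ if in addition $g|_C = s$. If every $1$ of $e$ can be completed to a deterministic grid, pick one such grid $g_{C,s}$ for each $1$ and set $S = \{\, g_{C,s} \,\}$: at any context $C'$ the marginal of $S$ contains $s$ whenever $(C', s)$ is a $1$ of $e$ (witnessed by $g_{C',s}$) and contains nothing else (each $g_{C,s}$ is a grid, so every $g_{C,s}|_{C'}$ is a $1$ of $e$), so $S$ witnesses locality. Conversely, if $e$ is local with witnessing set $S$, then any $1$ at $(C,s)$ equals $g|_C$ for some $g \in S$, since the marginal of $S$ at $C$ is exactly the set of $1$s of $e$ there; and that $g$ is a deterministic grid, because at every context its restriction lies in that context's marginal and hence is a $1$ of $e$, so $g$ completes the chosen $1$. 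Therefore $e$ is logically nonlocal exactly when some $1$ in its table admits no completion to a deterministic grid.

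I expect no deep obstacle: the proposition is essentially the possibilistic specialisation of the Abramsky--Brandenburger characterisation of non-contextuality by existence of a global section (equivalently, of support inside the no-signalling polytope), and could be obtained by quoting that. The only points needing care are (i) recording that possibilistic combination collapses to set-theoretic union, so that ``union of local deterministic models'' is the faithful rendering of possibilistic locality rather than an ad hoc notion; and (ii) noticing that the grid condition --- that every joint outcome induced by $g$ is possible in $e$ --- is precisely what guarantees that adjoining $g$ to the witnessing set never manufactures a spurious $1$, so the equivalence genuinely hinges on assembling $S$ out of grids already compatible with the table.
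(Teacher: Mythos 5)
The paper states this proposition as an imported result from \cite{mansfield:11} and gives no proof of its own, so there is nothing internal to compare against. Your argument is correct and is essentially the standard one from that reference: over the Boolean semiring a possibilistic mixture of deterministic models collapses to a union, each $\delta_g$ in such a union is automatically a grid compatible with the table, and the two formulations then follow by the marginal bookkeeping you carry out.
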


Local deterministic models are empirical models for which the outcome at each site is determined uniquely by the measurement at that site, and in the tabular representation take the form of \emph{deterministic grids}; e.g.~Table \ref{hptable} (b). Deterministic grids correspond to global sections of the event sheaf in the sheaf-theoretic approach \cite{mansfield:13t}, and indeed logical nonlocality is a special case of the general notion of \emph{contextuality} as considered in \cite{abramsky:11}, which is also proved there to be equivalent to the failure of a model to be realisable by a factorisable hidden variable model.

In the case of the Hardy paradox, it is clear that the $1$ in Table \ref{hptable} cannot be completed to a deterministic grid, regardless of the unspecified entries. However, depending on the scenario, this is just one way in which a model might exhibit nonlocality at the possibilistic level \cite{mansfield:11,mansfield:13t}.

\begin{definition}\label{def:hardynl}
Up to re-labelling of measurements and outcomes, any possibilistic $(2,2,2)$ empirical model containing the arrangement of $1$'s and $0$'s shown in Table \ref{hptable} (a) is said to \emph{contain a Hardy paradox} (i.e., it admits Hardy's logical nonlocality proof) and we say that the joint outcome $(\uparrow,\uparrow)$ \emph{witnesses Hardy nonlocality}.
\end{definition}

\begin{table}
\caption{\label{hptable} (a) A possibilistic empirical model containing a Hardy paradox. This is a possibility table in which $1$ denotes ``possible'' and $0$ denotes ``impossible''. The blank entries are not relevant and may each take either of the values. (b) A ``deterministic grid'' or local deterministic model.}
\begin{center}
\begin{tabular}{cc}
\begin{tabular}{cc}
~ & Bob \\
Alice & \begin{tabular}{c|>{\centering\arraybackslash}p{.4cm}>{\centering\arraybackslash}p{.4cm}|>{\centering\arraybackslash}p{.4cm}>{\centering\arraybackslash}p{.4cm}|}
~ & $\uparrow$ & $\downarrow$ & $R$ & $G$ \\ \hline
$\uparrow$ & 1 & ~ & ~ & 0 \\
$\downarrow$ & ~ & ~ & ~ & ~ \\ \hline
$R$ & ~ & ~ & 0 & ~ \\
$G$ & 0 & ~ & ~ & ~ \\ \hline
\end{tabular}
\end{tabular}
&
\begin{tabular}{cc}
~ & Bob \\
Alice & \begin{tabular}{c|>{\centering\arraybackslash}p{.4cm}>{\centering\arraybackslash}p{.4cm}|>{\centering\arraybackslash}p{.4cm}>{\centering\arraybackslash}p{.4cm}|}
~ & $\uparrow$ & $\downarrow$ & $R$ & $G$ \\ \hline
$\uparrow$ & \textcolor{blue}{1} & \textcolor{gray}{0} & \textcolor{blue}{1} & \textcolor{gray}{0} \\
$\downarrow$ & \textcolor{gray}{0} & \textcolor{gray}{0} & \textcolor{gray}{0} & \textcolor{gray}{0} \\ \hline
$R$ & \textcolor{blue}{1} & \textcolor{gray}{0} & \textcolor{blue}{1} & \textcolor{gray}{0} \\
$G$ & \textcolor{gray}{0} & \textcolor{gray}{0} & \textcolor{gray}{0} & \textcolor{gray}{0} \\ \hline
\end{tabular}
\end{tabular}
\\
~ & ~ \\
(a) & (b)
\end{tabular}
\end{center}
\end{table}

Definition \ref{def:hardynl} defines Hardy nonlocality for $(2,2,2)$ scenarios. It is also possible to extend the definition to $(2,2,l)$ scenarios simply by course-graining outcomes; see Table \ref{genpar}. Furthermore, one may define Hardy nonlocality in $(2,k,l)$ models as arising whenever some $2 \times 2$ subtable (i.e., restricting attention to any two of the $k$ measurements at each site) contains a Hardy paradox; see Table \ref{2k2par}.

\begin{definition}
Any possibilistic $(2,k,l)$ empirical model containing a $2 \times 2$ subtable which is isomorphic (up to coarse-graining of outcomes and re-labelling of measurements and outcomes) to Table~\ref{genpar} is said to contain a \emph{(coarse-grained) Hardy paradox}.
\end{definition}

\begin{table}
\caption{\label{genpar} A $(2,2,l)$ scenario with a coarse-grained Hardy paradox.}
\begin{center}
\begin{tabular}{c|c>{\centering\arraybackslash}p{2.2cm}c|cc|}
~ & $o_1'$ & $\cdots$ & $o_l'$ & $o_1 \cdots o_{m_2}$ & $o_{m_2+1} \cdots o_l$ \\ \hline
$o_1'$ & $1$ & ~ & ~ & ~ & $0 \quad \cdots \quad 0$ \\
$\vdots$ & ~ & ~ & ~ & ~  & ~ \\
$o_l'$ & ~ & ~ & ~ & ~  & ~ \\ \hline
$\begin{array}{c}
o_1 \\ \vdots \\ o_{m_1}
\end{array}$
& ~ & ~ & ~ &
$\begin{array}{ccc}
0 & \cdots & 0 \\ \vdots & \ddots & \vdots \\ 0 & \cdots & 0
\end{array}$
& ~ \\
$\begin{array}{c}
o_{m_1+1} \\ \vdots \\ o_l
\end{array}$
& 
$\begin{array}{c}
0 \\ \vdots \\ 0
\end{array}$
& ~ & ~ & ~ & ~ \\ \hline
\end{tabular}
\end{center}
\end{table}

\begin{table}
\caption{\label{2k2par} A $(2,k,2)$ scenario containing a Hardy paradox.}
\begin{center}
\begin{tabular}{c|c|c|c|}
~ & ~ & ~ & ~ \\ \hline
~ & \begin{tabular}{cc}
1 & ~ \\ ~ & ~
\end{tabular}
& $\cdots$ &
\begin{tabular}{cc}
~ & 0 \\ ~ & ~
\end{tabular}
\\ \hline
~ & \vdots & ~
\begin{tabular}{ccc}
~ & ~ & ~ \\ ~ & $\ddots$ & ~ \\ ~ & ~ & ~
\end{tabular}
& $\vdots$ \\ \hline
~ &
\begin{tabular}{cc}
~ & ~ \\ 0 & ~
\end{tabular}
& $\cdots$ &
\begin{tabular}{cc}
0 & ~ \\ ~ & ~
\end{tabular} 
\\ \hline
\end{tabular}
\end{center}
\end{table}

Wang and Markham have described a generalisation of Hardy's logical nonlocality proof to $(n,2,2)$ scenarios, which they have used to demonstrate that all symmetric $n$-partite qubit states for $n>2$ admit logical nonlocality proofs~\cite{wang:12}. This kind of generalisation has been described elsewhere by Ghosh, Kar and Sarkar~\cite{ghosh:98}, and is also considered in \cite{cereceda:04} and \cite{choudhary:08}.

We write $p(o \mid m) = 1$ if it is possible with probability greater than zero to obtain joint outcome $o$ when joint measurement $m$ is made, and $p(o \mid m) = 0$ otherwise. Here, $0$ and $1$ play the role of Boolean truth values. For $(n,2,2)$ scenarios we also let measurements and outcomes both be labelled by $\{0,1\}$ at each site, though note that these $0$'s and $1$'s simply play the role of labels.

\begin{definition}
For any $(n,2,2)$ scenario, an \emph{$n$-partite Hardy paradox} occurs if (up to re-labelling of measurements and outcomes) the following possibilistic conditions are satisfied.
\begin{itemize}
\item
$p(\,0, \dots , 0\,\mid\,0,\dots,0\,)=1$
\item
$p\left(\,\pi(1,0,\dots,0)\,\mid\,\pi(1,0,\dots,0)\,\right)=0$, \\ for all permutations $\pi \in S^n$
\item
$p(\,0,\dots,0\,\mid\,1,\dots,1\,)=0$
\end{itemize}
\end{definition}

The $n=3$ generalisation of the Hardy paradox can also be represented in a three-dimensional version of the tabular representation; see Fig.~\ref{fig:3dhardy}\index{empirical model!3d representation}. The advantage of the representation, as we will see, is that it provides a powerful visual means of analysing models. The axes again correspond to different sites, the eight medium-sized cubes to joint measurement settings, and the smallest sub-cubes to joint outcomes, similarly to the $n=2$ case. The properties of the tabular representation generalise in the obvious way to the third dimension.

\drawNKLCube{
	\pointAt{0}{0}{3}{blue!50}
	
	\pointAt{0}{0}{0}{red!50}
	\pointAt{3}{0}{3}{red!50}
	\pointAt{0}{3}{3}{red!50}
	\pointAt{2}{2}{1}{red!50}
}{\label{fig:3dhardy} The $n=3$ Hardy paradox. The blue entry (upper-left sub-cube) corresponds to a possible joint outcome, and the red entries (other sub-cubes) to impossible ones. The blank entries are not relevant, and may take either of the values.}

For example, in \cite{mansfield:11} it was shown that for $(2,k,l)$ scenarios a possibilistic empirical model is local if and only if every $1$ in its table can be completed to a deterministic grid.
This characterisation generalises in the obvious way to the three-dimensional representation for $n=3$ models, so that we can similarly see by inspection that the blue entry in Fig.~\ref{fig:3dhardy} cannot be completed to a (three-dimensional) deterministic grid, just as the $1$ in Table \ref{hptable} cannot be completed to a deterministic grid, and therefore any $(3,2,2)$ model containing this arrangement of $1$'s and $0$'s, or red and blue boxes, is logically nonlocal.

It is known that Hardy nonlocality completely characterises logical nonlocality in a variety of scenarios. The following theorem combines the completeness results of \cite{mansfield:11}.

\begin{theorem}[Mansfield and Fritz \cite{mansfield:11}]\label{equivprop}
For any $(2,k,2)$ or $(2,2,l)$ scenario, an empirical model is logically nonlocal if and only if it contains a (coarse-grained) Hardy paradox.
\end{theorem}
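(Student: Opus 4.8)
The statement has two directions, of which one is essentially immediate. If a model contains a (coarse-grained) Hardy paradox then it is logically nonlocal: exactly as the excerpt observes for Table~\ref{hptable}(a), the witnessing $1$ cannot be completed to a deterministic grid — were Alice's and Bob's outcome functions both to reproduce the witness, the three $0$'s in the remaining contexts of Table~\ref{genpar} (or Table~\ref{2k2par}) would be forced, a contradiction, no matter how the blank entries or the coarse-grained blocks are resolved — so by the characterisation of logical nonlocality via non-completable $1$'s the model is nonlocal. The real content is the converse: that in these scenarios the Hardy paradox is the \emph{only} obstruction to locality.

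For the converse, suppose the model is logically nonlocal, so that some $1$, say the outcome $(o,o')$ in a context $(a,b)$, cannot be completed to a deterministic grid; I would aim to build a $2\times 2$ Hardy sub-paradox around this witness. Consider first the $(2,2,l)$ case, where there is only one further measurement $a'$ for Alice and $b'$ for Bob, so the only freedom is in how to coarse-grain outcomes. Partition the outcomes of $a'$ into $\alpha_1$, those compatible with $o'$ in context $(a',b)$, and its complement $\alpha_0$, and likewise partition the outcomes of $b'$ into $\beta_1$, those compatible with $o$ in context $(a,b')$, and $\beta_0$. Unwinding the definition of a completion, the $1$ at $(o,o')$ is completable precisely when some pair in $\alpha_1\times\beta_1$ is possible in context $(a',b')$; hence non-completability says that the whole block $\alpha_1\times\beta_1$ is filled with $0$'s there. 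Coarse-graining $a$ to $\{o\}$ against the rest, $b$ to $\{o'\}$ against the rest, $a'$ to $\alpha_1$ against $\alpha_0$, and $b'$ to $\beta_1$ against $\beta_0$ then reproduces exactly the pattern of Table~\ref{genpar} — provided all four groups are nonempty, so that it is genuinely a $2\times 2$ table. That $\alpha_1$ and $\beta_1$ are nonempty follows since the model is no-signalling, so the marginals $p(o'\mid b)$ and $p(o\mid a)$ are both $1$; and if, say, $\beta_0$ were empty then non-completability would force the marginal $p(\cdot\mid a')$ to vanish on all of $\alpha_1$, contradicting its value computed from context $(a',b)$. So a (coarse-grained) Hardy paradox is present.

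The $(2,k,2)$ case is complementary in character: the outcomes are already binary, so no coarse-graining is needed, but one must now \emph{choose} which pair of second measurements $a'$, $b'$ to use. Here I would use that, with binary outcomes, a deterministic grid is just an assignment of one bit to each measurement, and each constraint ``the chosen outcome pair is possible'' is a clause of width one or two; so completability of the $1$ at $(o,o')$ is exactly the satisfiability of an explicit $2$-SAT instance in which the bits of $a$ and $b$ are fixed to the values $o$ and $o'$. Non-completability therefore produces a minimal refutation — a shortest contradictory cycle in the implication graph of that instance — and the plan is to read off from this cycle two measurements of each party, together with an assignment of the labels $\uparrow$, $R$, $G$ realising Table~\ref{2k2par}, once more invoking no-signalling to supply the required ``possible'' entry.

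The step I expect to be the real obstacle — and the reason the theorem fails already in the $(2,3,3)$ and $(3,2,2)$ scenarios — is precisely this final localisation: showing that a \emph{minimal} reason for non-completability never involves more than two measurements per party, i.e.\ that the minimal $2$-SAT refutations arising here are always Hardy-type $4$-cycles (and, in the $(2,2,l)$ case, that the nonemptiness bookkeeping above always goes through). This is exactly where the binary-outcome, two-party structure and possibilistic no-signalling must be used in an essential way, and it is what fails once one allows three outcomes or three parties.
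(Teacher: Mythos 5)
First, note that the paper does not actually prove Theorem~\ref{equivprop}: it is imported wholesale from Mansfield and Fritz \cite{mansfield:11}, so your attempt can only be judged on its own merits. On those merits, the easy direction is fine, and your $(2,2,l)$ argument is essentially complete and correct: partitioning the outcomes of $a'$ and $b'$ into the sets $\alpha_1,\beta_1$ compatible with the witnessing pair and their complements, observing that non-completability is exactly the vanishing of the block $\alpha_1\times\beta_1$ in the fourth context, and using possibilistic no-signalling to show all four blocks are nonempty, does reproduce Table~\ref{genpar}. This is in substance the same construction used in \cite{mansfield:11} for that scenario.

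The genuine gap is the $(2,k,2)$ case, and you have named it yourself: everything after ``the plan is to read off from this cycle\dots'' is a statement of intent, not a proof. Reformulating completability as a $2$-SAT instance is sound, but the entire mathematical content of the $(2,k,2)$ completeness theorem is the localisation step you defer --- showing that an unsatisfiable instance arising from a no-signalling possibility table always exhibits a contradictory implication cycle of length four, i.e.\ a $2\times 2$ subtable of the form of Table~\ref{2k2par}. This is not automatic: a minimal refutation can a priori thread through many measurements ($x_1{=}0 \Rightarrow y_2{=}0 \Rightarrow x_3{=}1 \Rightarrow y_4{=}1 \Rightarrow \cdots$ before closing on $y_1$), and one must argue, by induction on the length of such a chain and repeated use of no-signalling on the intermediate $2\times 2$ blocks, that either the chain can be shortened or some block along it already contains a Hardy configuration --- noting also that the resulting Hardy paradox need not sit at the original non-completable $1$, only somewhere in the model. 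Without that argument the ``only if'' direction for $(2,k,2)$ is unproven; your closing remark correctly diagnoses where the two-outcome, two-party structure must enter, but diagnosing the obstacle is not the same as overcoming it.
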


We also rephrase the definition of strong nonlocality as introduced in \cite{abramsky:11} within the present framework.

\begin{definition}
An empirical model is \emph{strongly nonlocal} if and only if \emph{no} $1$ in its possibility table can be completed to a deterministic grid.
\end{definition}

Hardy and logical nonlocality are situated within the qualitative hierarchy of increasing strengths of nonlocality as follows:
\begin{equation}\label{eq:hierarchy}
\text{probabilistic} \, < \, \text{Hardy} \, < \, \text{logical} \, < \, \text{strong},
\end{equation}
where membership of any of these classes implies membership of all lower classes. At the lowest level, a model is probabilistically nonlocal if and only if it violates some Bell inequality. The hierarchy is in general strict: for each class, empirical models can be found which do not belong to any higher class. For measurement scenarios in which Theorem~\ref{equivprop} applies, however, the Hardy and logical classes coincide.

\section{Complexity of Logical Nonlocality}\label{sec:complexity}

Theorem \ref{equivprop} is relevant to the computational complexity of deciding logical nonlocality in $(2,2,l)$ and $(2,k,2)$ scenarios, where it is equivalent to deciding whether a Hardy paradox occurs. The fact was mentioned in \cite{mansfield:11}; here we find explicit polynomial algorithms.

\begin{proposition}\label{prop:complexity}
Polynomial algorithms can be given for deciding nonlocality in $(2,2,l)$ and $(2,k,2)$ models.
\end{proposition}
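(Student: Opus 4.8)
\section*{Proof proposal}

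The plan is to use Theorem~\ref{equivprop} to replace the (a priori delicate) problem of deciding logical nonlocality by the concrete problem of deciding whether the possibility table contains a (coarse-grained) Hardy paradox, and then to exhibit explicit search algorithms for the latter whose running time is polynomial in the size of the table --- which is $\Theta(l^{2})$ entries for a $(2,2,l)$ model (four $l\times l$ outcome blocks) and $\Theta(k^{2})$ entries for a $(2,k,2)$ model ($k^{2}$ joint settings, each a $2\times 2$ block). If the model is presented probabilistically, I would first apply possibilistic collapse in time linear in the table size, reducing to the possibilistic case.

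For $(2,k,2)$ models the argument is the simpler of the two. By Definition~\ref{def:hardynl} together with Theorem~\ref{equivprop}, such a model is logically nonlocal exactly when some choice of two of Alice's measurements and two of Bob's measurements yields a $2\times 2$ subtable which, up to relabelling of measurements and outcomes, is the Hardy table of Table~\ref{hptable}(a). There are $\binom{k}{2}^{2} = O(k^{4})$ such subtables, each of constant size, and for each there are only constantly many relabellings to inspect; so the pattern match is $O(1)$ per subtable and the algorithm runs in $O(k^{4})$, i.e.\ quadratic in the size of the input table. (One can shave this down by fixing the witnessing measurement pair and outcome first and then searching independently for the second measurement on each side, but a crude polynomial bound already suffices.)

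For $(2,2,l)$ models the key observation I would use is that, once we fix which of Alice's two measurements plays the role of ``measurement~$1$'' in Table~\ref{genpar} and likewise for Bob (the four ways of assigning these roles), and fix a candidate witnessing joint outcome $(\alpha,\beta)$ --- one of the at most $l^{2}$ outcomes that are possible for the two ``measurement~$1$'' settings --- the coarse-graining partitions needed for a Hardy paradox are \emph{forced}: group~I on Alice's second measurement may be taken to be exactly $A := \{\, a : p\big((a,\beta)\mid(a_{2},b_{1})\big)=1 \,\}$, the outcomes already possible when Bob's measurement~$1$ yields $\beta$, and symmetrically group~I on Bob's side may be taken to be $B := \{\, b : p\big((\alpha,b)\mid(a_{1},b_{2})\big)=1 \,\}$. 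Indeed the defining zeros of Table~\ref{genpar} force group~I to contain $A$ (respectively $B$), while enlarging group~I can only make the remaining all-zeros block on the group~I outcomes harder to satisfy; hence a coarse-grained Hardy paradox with this choice of roles and witness exists if and only if $A$ and $B$ are both nonempty proper subsets of the respective outcome sets and $p\big((a,b)\mid(a_{2},b_{2})\big)=0$ for every $a\in A$ and $b\in B$. Computing $A$ and $B$ costs $O(l)$ and verifying the all-zeros block costs $O(l^{2})$, so iterating over the constantly many role choices and the $O(l^{2})$ candidate witnesses gives an $O(l^{4})$ algorithm --- again quadratic in the size of the input table --- which declares the model logically nonlocal precisely when some iteration succeeds.

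The only step beyond routine bookkeeping, and the one I expect to be the main obstacle, is the ``forced coarse-graining'' claim for $(2,2,l)$: the point that one never needs to search over the exponentially many ways of partitioning the second-measurement outcomes, because the minimal admissible partition (group~I $=A$, resp.\ $=B$) is optimal. The monotonicity remark above settles this, with two caveats to discharge carefully: one must check nonemptiness of $A$ and $B$ --- which is automatic under no-signalling, since then the outcome $\beta$ (resp.\ $\alpha$) remains possible under the opposite party's other measurement --- and one must exclude the degenerate case in which a group~II is empty, so that the coarse-grained structure of Table~\ref{genpar} genuinely applies.
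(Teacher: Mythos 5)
Your proposal is correct, and for the $(2,k,2)$ case it coincides with the paper's proof: enumerate the $\binom{k}{2}^{2}$ sub-tables and pattern-match each constant-sized block, giving $O(k^{4})$. For the $(2,2,l)$ case your route is genuinely different from the paper's, though the two algorithms are computationally the same in disguise. The paper does not search for Hardy paradoxes at all in that case: it invokes the direct characterisation of logical nonlocality (a model is nonlocal iff some $1$ in its table cannot be completed to a deterministic grid) and observes that each of the $4l^{2}$ entries can be tested for grid-completability in $O(l^{2})$ time, since a completion is just a choice of one outcome for each party's other measurement. You instead stay inside the Hardy-paradox formulation, which forces you to confront the a priori exponential search over coarse-grainings; your ``forced coarse-graining'' lemma --- take group~I to be exactly the support sets $A$ and $B$, by monotonicity of the all-zeros constraint --- resolves this correctly, and is in effect a re-derivation of the equivalence between ``witnesses a coarse-grained Hardy paradox'' and ``cannot be completed to a deterministic grid'' (indeed $A$ and $B$ are precisely the candidate grid-completions, and your zero-block condition says no pair of them works). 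What the paper's approach buys is brevity, since the grid criterion is already available as a cited proposition; what yours buys is an explicit justification, absent from the paper, of why the freedom in choosing the coarse-graining partitions does not blow up the search, together with the correct handling of the degenerate and no-signalling boundary cases. Both arrive at $O(l^{4})$.
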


\begin{proof}
For $(2,k,2)$ scenarios, deciding whether a model in the tabular form is local or nonlocal simply amounts to checking all $2\times 2$ sub-tables for such a Hardy paradox, which gives an algorithm that is polynomial in the size of the input table: we check for the $64$ possible Hardy configurations in each of ${k \choose 2}^2$ sub-tables, which is $O(k^4)$. For $(2,2,l)$ scenarios, one has to check each $1$ in the table to see whether it can be completed to a deterministic grid. There are $4l^2$ entries in the table, and each check is $O(l^2)$, so again we have an algorithm that is polynomial in the size of the input.
\end{proof}

It was conjectured in \cite{mansfield:11} that decidability of logical nonlocality with $k$ as the free input is NP-hard when $n>2,l\geq2$ or $n\geq2,l>2$, as is known to be the case for probabilistic models \cite{pitowsky:91}. The problem was shown to be NP by Abramsky in \cite{abramsky:10}, and the it has since been proved to be NP-complete by Abramsky, Gottlob and Kolaitis \cite{abramsky:13b}. This gives strong reason to suspect that it is not possible to obtain a classification of conditions that are necessary and sufficient for logical nonlocality in full generality.

\section{Strong Nonlocality and the PR Box}\label{sec:pr}

Recall from \eqref{eq:hierarchy} that strong nonlocality is strictly stronger than logical nonlocality.
Theorem \ref{equivprop} can be used to give a constructive proof of a result originally proved by case-analysis by Lal \cite{abramsky:11,lal:13} that the only strongly nonlocal $(2,2,2)$ models are the Popescu-Rohrlich no-signalling boxes \cite{popescu:94}, whose probability table up to re-labelling is given in Table~\ref{tab:pr}.

\begin{proposition}\label{prop:sc}\index{nonlocality!strong nonlocality}
The only strongly nonlocal no-signalling $(2,2,2)$ models are the PR boxes.
\end{proposition}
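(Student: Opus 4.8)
The plan is to exploit Theorem~\ref{equivprop} together with the definition of strong nonlocality to pin down exactly which $(2,2,2)$ possibility tables can occur. First I would unpack what strong nonlocality means concretely in a $(2,2,2)$ scenario: every $1$ appearing in the possibility table fails to extend to a deterministic grid. Since logical nonlocality only requires \emph{some} $1$ to fail this test, strong nonlocality is a far more rigid condition, and by Theorem~\ref{equivprop} any strongly nonlocal model in particular contains a (coarse-grained, hence here an ordinary) Hardy paradox. So I would start from the Hardy configuration of Table~\ref{hptable}(a), fixing the labelling so that $(\uparrow,\uparrow)$ is the witnessing $1$ and the three $0$'s sit at $(\uparrow,G)$, $(G,\uparrow)$ and $(R,R)$.

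Next I would propagate the no-signalling constraint and the strong-nonlocality requirement to fill in the remaining entries. No-signalling at the possibilistic level says that for each measurement choice of one party the set of possible outcomes of the other party is independent of that choice; combined with the fact that in the $l=2$ case each row and each column of a measurement block must contain at least one $1$, this already forces several of the blank entries. I would then impose that each of the other $1$'s in the table also cannot be completed to a deterministic grid — walking through the entries $(\downarrow,\cdot)$, $(\cdot,\downarrow)$, $(R,G)$, $(G,R)$, $(G,G)$ and so on — and show that each such demand either forces a specific value or produces a contradiction unless the table is precisely the possibilistic collapse of the PR box in Table~\ref{tab:pr}. The symmetry of the argument under swapping parties and under relabelling outcomes should cut the case analysis down substantially: essentially one checks that the Hardy $0$'s plus no-signalling plus ``no $1$ extends'' leaves no freedom.

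Finally I would verify the converse direction — that the PR box really is strongly nonlocal — by directly checking that each of its four possible joint outcomes in each measurement context fails to extend to a deterministic grid; this is the easy, mechanical half and mirrors the standard observation that the PR box has no local possibilistic part at all. The main obstacle I anticipate is the bookkeeping in the forward direction: making sure the case analysis over the unspecified entries is genuinely exhaustive while keeping it short by invoking the relabelling symmetries, rather than accidentally assuming a particular completion. In other words, the subtlety is not any single deduction but organising the argument so that ``contains a Hardy paradox'' plus ``no $1$ completes'' plus ``no-signalling'' provably collapses to a single table up to the allowed symmetries, rather than merely being consistent with it.
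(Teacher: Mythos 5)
Your proposal is correct and follows essentially the same route as the paper: start from the Hardy configuration guaranteed by Theorem~\ref{equivprop}, alternately propagate possibilistic no-signalling (which forces $1$'s) and the requirement that every $1$ fail to extend to a deterministic grid, equivalently witness a Hardy paradox (which forces $0$'s), until the support collapses to that of the PR box, whose probabilities are then fixed by no-signalling. The only cosmetic differences are that you add an explicit check of the converse (that the PR box is strongly nonlocal), which the paper leaves implicit, and that the paper organises the filling-in as three concrete table stages rather than a case walk-through.
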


Before we prove this proposition, recall that by Theorem \ref{equivprop} strong nonlocality is equivalent to the property that every $1$ in its possibility table witnesses a Hardy paradox. We will simply use this property together with the requirement that the model satisfies no-signalling to derive our result. An illustration of no-signalling in the possibilistic sense is the following. We see from Table~\ref{hptable}~(a) that if Alice and Bob each make their $\{\uparrow,\downarrow\}$ measurement then it is possible for Alice to obtain the outcome $\uparrow$. Now in order to make sure that Bob cannot instantaneously signal to Alice who is assumed to be spacelike separated from him it must be the case that it would also be possible for Alice to obtain the outcome $\uparrow$ had Bob made his $\{R,G\}$ measurement. We can therefore deduce that since the event $(\uparrow_A,G_B)$ is not possible the event $(\uparrow_A,R_B)$ must be possible. More generally in the tabular representation, no-signalling translates to the condition that whenever a $1$ occurs in a table, the outcome row and column the event belongs to must each contain at least one $1$ per measurement setting, for otherwise the possibility of witnessing a particular outcome for one party could depend on the measurement choice of the other (see \cite{mansfield:11} for a more detailed discussion).

\begin{proof}
For any choice of measurements there must be \emph{some} possible outcome. This possible assignment is represented by a $1$ in the table, and it must witness a Hardy paradox. After re-labelling as necessary, we can represent the model as in Table \ref{hptable} (a). For this to be a no-signalling model, it is necessary to fill in $1$'s as in Table~\ref{tab:sc}~(a). Using the fact that the $1$'s in the lower-right box must also witness Hardy paradoxes, we must fill in $0$'s as in Table \ref{tab:sc} (b). By no-signalling, the remaining unspecified entry in the upper-left box must be a $1$, and by the fact that it must witness a Hardy paradox, the remaining entry in the lower-right box must be a $0$. We thus arrive at Table~\ref{tab:sc}~(c), and the unique no-signalling probabilistic empirical model whose possibility table has this form is the PR box.
\end{proof}

\begin{table}
\caption{\label{tab:sc} Stages in the proof of proposition~\ref{prop:sc}.}
\begin{center}
\begin{tabular}{ccc}
\begin{tabular}{c|cc|cc|}
~ & ~ & ~ & ~ & ~ \\ \hline
~ & 1 & ~ & \textcolor{blue}{1} & 0 \\
~ & ~ & ~ & ~ & \textcolor{blue}{1} \\ \hline
~ & \textcolor{blue}{1} & ~ & 0 & \textcolor{blue}{1} \\
~ & 0 & \textcolor{blue}{1} & \textcolor{blue}{1} & ~ \\ \hline
\end{tabular}
\quad & \quad
\begin{tabular}{c|cc|cc|}
~ & ~ & ~ & ~ & ~ \\ \hline
~ & 1 & \textcolor{blue}{0} & 1 & 0 \\
~ & \textcolor{blue}{0} & ~ & \textcolor{blue}{0} & 1 \\ \hline
~ & 1 & \textcolor{blue}{0} & 0 & 1 \\
~ & 0 & 1 & 1 & ~ \\ \hline
\end{tabular}
\quad & \quad
\begin{tabular}{c|cc|cc|}
~ & ~ & ~ & ~ & ~ \\ \hline
~ & 1 & 0 & 1 & 0 \\
~ & 0 & \textcolor{blue}{1} & \textcolor{blue}{0} & 1 \\ \hline
~ & 1 & \textcolor{blue}{0} & 0 & 1 \\
~ & 0 & 1 & 1 & \textcolor{blue}{0} \\ \hline
\end{tabular} \\
~ & ~ & ~ \\
(a) \quad & \quad (b) \quad & \quad (c)
\end{tabular}
\end{center}
\end{table}

\section{Bell State Anomaly}

It is known how to prescribe projective measurements for almost all entangled two-qubit states such that the resulting empirical model will contain a Hardy paradox \cite{hardy:93}, the exception being the maximally entangled states; i.e., the familiar Bell states. This naturally raises the question of whether there exist any projective measurements that can be chosen for the maximally entangled states such that the resulting empirical model contains a Hardy paradox. Indeed, in light of Theorem \ref{equivprop} we know that this is equivalent to asking whether the maximally entangled states are logically nonlocal under projective measurements. Some previous failed attempts at finding a logical nonlocality proof for the Bell states are described in \cite{cabello:00}.

We answer this question in the negative, and show that no projective measurements can be chosen that lead to a Hardy paradox (and thus logical nonlocality). A result showing that if the same pair of local measurements are available at each qubit then it is impossible to realise a Hardy paradox was proved independently by Abramsky and Constantin \cite{abramsky:13}, but the theorem we are about to present holds for any number of measurements per qubit, and without the restriction that the same set of local measurements be available at each qubit.

In fact, Bell states are the only entangled $n$-qubit states, for any $n$, which are not logically nonlocal under projective measurements, since for $n>2$ it is known that projective measurements can be found for all $n$-qubit entangled states which give rise to logical nonlocality \cite{abramsky:16hardy}.
In this sense, despite being among the most studied and utilised states in the fields of quantum information and computation, the Bell states are actually anomalous in the landscape of entangled states.

\begin{theorem}\label{thm:ba}
Bell states are not logically nonlocal under projective measurements.
\end{theorem}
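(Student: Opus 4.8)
The plan is to reduce the claim to an explicit trigonometric incompatibility, using Theorem~\ref{equivprop} to replace ``logically nonlocal'' by ``contains a Hardy paradox''. Fix a Bell state, which up to local unitaries we may take to be $\ket{\Phi} = \tfrac{1}{\sqrt 2}(\ket{00}+\ket{11})$. A projective measurement on a qubit is specified by a rank-one projector $P = \ketbra{\psi}{\psi}$ (the other outcome being $\mathbb{1}-P$), so Alice's $j$-th measurement is some $\ket{a_j}$ and Bob's $i$-th some $\ket{b_i}$. The key computational fact about maximally entangled states is the ``transpose trick'': $(\ketbra{a}{a} \otimes \ketbra{b}{b})\ket{\Phi}$ has squared norm $\tfrac12 |\braket{a}{b^*}|^2$ where $b^*$ is the complex conjugate in the computational basis. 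Thus the possibilistic support of the model is governed entirely by the quantities $\braket{a_j}{b_i^*}$, and a joint outcome has probability zero exactly when the corresponding pair of one-dimensional subspaces (one from Alice, one the conjugate from Bob) is orthogonal.

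Next I would write out what a coarse-grained Hardy paradox demands in this language. Since $l=2$ here, no coarse-graining is needed and by Theorem~\ref{equivprop} it suffices to rule out the $2\times2$ configuration of Table~\ref{hptable}(a): there exist Alice measurements $\ket{a_0},\ket{a_1}$ and Bob measurements $\ket{b_0},\ket{b_1}$, and a labelling of outcomes, with (i) the $(\uparrow,\uparrow)$ cell possible, (ii) the $(\uparrow,G)$ cell impossible, (iii) the $(G,\uparrow)$ cell impossible, (iv) the $(R,R)$ cell impossible. Translating via the transpose trick, ``impossible'' means orthogonality of the relevant vectors among $\{\ket{a_0},\ket{a_1},\mathbb{1}-\ketbra{a_0}{a_0},\dots\}$ against the conjugated Bob vectors. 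Writing everything on the Bloch sphere, each qubit state is a point and orthogonality of projectors is antipodality; I would show that the three ``impossible'' conditions force $\ket{a_0^*}$, a Bob vector, and the $\ket{b_1^*}$-family into a rigid configuration in which the remaining ``possible'' cell is in fact forced to have amplitude zero, contradicting (i). Concretely, conditions (ii) and (iii) pin down two inner products to be zero, which in $\mathbb{C}^2$ determines the partner vectors up to phase; feeding these into (iv) and then into (i) yields $|\braket{a_0}{b_0^*}|^2 = 0$.

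The main obstacle is bookkeeping the complex phases: in $\mathbb{C}^2$ orthogonality is more rigid than in $\mathbb{R}^2$ (an orthogonal complement of a line is a unique line, not a choice), and I want to exploit exactly this rigidity, so I must be careful that the conjugation $b \mapsto b^*$ is tracked consistently and that I have correctly identified which of the two outcomes of each measurement plays the role of $\uparrow$ vs.\ $\downarrow$ and $R$ vs.\ $G$ — the definition allows relabelling, so strictly I must check all four relabellings of outcomes at each site, but by the symmetry of the Hardy table under the swap that sends $(\uparrow,\downarrow,R,G)$ to a permutation, these collapse to one or two genuinely distinct cases. A clean way to organise this, which I would adopt, is: assume a Hardy paradox exists, derive the three orthogonality relations, observe they force the four relevant one-dimensional subspaces to be determined by just two free vectors, compute the would-be-possible amplitude in closed form, and show it vanishes identically — so no choice of measurements, and in particular no distinct set of measurements at Alice's and Bob's sites, can realise the paradox. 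Since the argument never used that Alice and Bob have the same measurements, nor bounded $k$, it gives the full strength claimed.
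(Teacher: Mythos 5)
Your proposal is correct, and it takes a genuinely different route from the paper. The paper computes the joint-outcome amplitudes in Bloch coordinates and extracts two outcome-swap symmetries, $p(01\mid AB)=p(10\mid AB)$ and $p(00\mid AB)=p(11\mid AB)$; it then classifies all possibilistic $(2,2,2)$ models consistent with these symmetries that are logically nonlocal (finding only the PR box and one further family), and rules each out — the latter because its perfect-correlation block forces all four observables to coincide. Your argument skips the classification entirely: the transpose trick $\bigl|(\bra{a}\otimes\bra{b})\ket{\Phi^+}\bigr|^2=\tfrac12\left|\braket{a}{b^*}\right|^2$ turns each ``impossible'' cell into an orthogonality relation in $\mathbb{C}^2$, and since orthogonal complements of lines in $\mathbb{C}^2$ are unique, the conditions $(\uparrow,G)=0$ and $(G,\uparrow)=0$ force $\ket{b_R^*}\propto\ket{a_\uparrow}$ and $\ket{a_R}\propto\ket{b_\uparrow^*}$, whereupon $(R,R)=0$ gives $\braket{a_\uparrow}{b_\uparrow^*}=0$, contradicting $(\uparrow,\uparrow)$ being possible — and this is manifestly labelling-independent, so your worry about checking relabellings evaporates (the four constrained cells are identified by their relative positions only). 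Your route is shorter and more self-contained; what the paper's longer route buys is the explicit identification of the unique symmetric, logically-nonlocal, non-quantum support (Table~\ref{tab:ba}~(d) and the PR box), which it reuses later to show that summing paradoxical probabilities cannot beat the Hardy bound on a Bell state. Note also that your rigidity step is specific to qubits ($l=2$), where the complement of a line is a line — which is exactly the case covered by the theorem, but it would not extend verbatim to coarse-grained paradoxes in higher dimensions.
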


\begin{proof}
We prove the statement for the Bell state
\[
\ket{\phi^+}  = \frac{1}{\sqrt{2}} \left( \ket{00} + \ket{11} \right).
\]
Since all other maximally entangled states are equivalent to this one up to local unitaries, which can easily be incorporated into the local measurements, the proof will extend to all maximally entangled states.

Any quantum mechanical empirical model obtained by making local projective measurements on $\ket{\phi^+}$ will necessarily give rise to a $(2,k,2)$ model. By Theorem \ref{equivprop} we know that Hardy's paradox completely characterises logical nonlocality for such scenarios, and that logical nonlocality would therefore imply the occurrence of a Hardy paradox in some $(2,2,2)$ sub-model. It therefore suffices to show that for any observables $A_1$ and $A_2$ for the first qubit and $B_3$ and $B_4$ for the second qubit the resulting model does not contain a Hardy paradox.

The $+1$ and $-1$ eigenvectors for these measurements will be given by
\begin{align*}
\ket{0_i} &= \cos{\frac{\theta_i}{2}} \ket{0} + e^{i \phi_i} \sin{\frac{\theta_i}{2}} \ket{1} \\
\ket{1_i} &= \sin{\frac{\theta_i}{2}} \ket{0} + e^{- i \phi_i} \cos{\frac{\theta_i}{2}} \ket{1}
\end{align*}
where $\{ ( \theta_i, \phi_i) \}_{i \in \{1, 2, 3, 4\} }$ label the coordinates of the $+1$ eigenvector of the respective measurements on the Bloch sphere. The amplitudes for the outcomes of the various joint measurements are calculated to be:
\begin{align*}
\braket{0_j0_k}{\phi^+} &= \frac{1}{\sqrt{2}} \left( \cos{ \frac{\theta_j}{2}} \cos{\frac{\theta_k}{2}} + e^{-i \left( \phi_j + \phi_k \right)} \sin{ \frac{ \theta_j}{2}} \sin{ \frac{\theta_k}{2}} \right) \\
\braket{0_j1_k}{\phi^+} &= \frac{1}{\sqrt{2}} \left( \cos{ \frac{\theta_j}{2}} \sin{\frac{\theta_k}{2}} + e^{-i \left( \phi_j - \phi_k \right)} \sin{ \frac{ \theta_j}{2}} \cos{ \frac{\theta_k}{2}} \right) \\
\braket{1_j0_k}{\phi^+} &= \frac{1}{\sqrt{2}} \left( \sin{ \frac{\theta_j}{2}} \cos{\frac{\theta_k}{2}} + e^{i \left( \phi_j - \phi_k \right)} \sin{ \frac{ \theta_j}{2}} \cos{ \frac{\theta_k}{2}} \right) \\
\braket{1_j1_k}{\phi^+} &= \frac{1}{\sqrt{2}} \left( \sin{ \frac{\theta_j}{2}} \sin{\frac{\theta_k}{2}} + e^{i \left( \phi_j + \phi_k \right)} \cos{ \frac{ \theta_j}{2}} \cos{ \frac{\theta_k}{2}} \right)
\end{align*}
where $j \in \{1,2\}$ and $k \in \{3,4\}$. We see that $\braket{0_j0_k}{\phi^+}=e^{-i\left(\phi_j+\phi_k\right)} \braket{1_j1_k}{\phi^+}$ and $\braket{0_j1_k}{\phi^+} = \braket{1_j0_k}{\phi^+}$ for each choice of measurements. Thus the symmetry of the underlying state manifests itself as a symmetry in the probabilities of the joint outcomes for each choice of measurements:
\begin{align}
p(01\mid AB) &= p(10\mid AB) \label{eq:sym1} , \\
p(00\mid AB) &= p(11\mid AB) \label{eq:sym2}.
\end{align}

Note that the PR box (Table~\ref{tab:pr}), which we know from Proposition \ref{prop:sc} to be the only strongly nonlocal $(2,2,2)$ model (up to re-labellings), satisfies these symmetries. However, it is also known that the PR box is not quantum-realisable \cite{popescu:94,tsirelson:80}, so while it satisfies the symmetries it nevertheless cannot be realised by measurements on $\ket{\phi^+}$.

\begin{table}
\caption{\label{tab:ba} Stages in the proof of Theorem \ref{thm:ba}.}
\begin{center}
\begin{tabular}{cccc}
\begin{tabular}{c|cc|cc|}
~ & ~ & ~ & ~ & ~ \\ \hline
~ & 1 & ~ & 1 & ~ \\
~ & ~ & ~ & ~ & ~ \\ \hline
~ & 1 & ~ & 1 & ~ \\
~ & ~ & ~ & ~ & ~ \\ \hline
\end{tabular}
\quad & \quad
\begin{tabular}{c|cc|cc|}
~ & ~ & ~ & ~ & ~ \\ \hline
~ & 1 & ~ & 1 & ~ \\
~ & ~ & \textcolor{blue}{1} & ~ & \textcolor{blue}{1} \\ \hline
~ & 1 & ~ & 1 & ~ \\
~ & ~ & \textcolor{blue}{1} & ~ & \textcolor{blue}{1} \\ \hline
\end{tabular}
\quad & \quad
\begin{tabular}{c|cc|cc|}
~ & ~ & ~ & ~ & ~ \\ \hline
~ & 1 & ~ & 1 & \textcolor{blue}{0} \\
~ & \textcolor{blue}{0} & 1 & ~ & 1 \\ \hline
~ & 1 & ~ & 1 & ~ \\
~ & \textcolor{blue}{1} & 1 & \textcolor{blue}{0} & 1 \\ \hline
\end{tabular}
\quad & \quad
\begin{tabular}{c|cc|cc|}
~ &\multicolumn{2}{c|}{$B_3$}  & \multicolumn{2}{c|}{$B_4$} \\ \hline
\multirow{2}{*}{$A_1$} & 1 & 0 & 1 & 0 \\
& 0 & 1 & 0 & 1 \\ \hline
\multirow{2}{*}{$A_2$} & 1 & 1 & 1 & 0 \\
~ & 1 & 1 & 0 & 1 \\ \hline
\end{tabular} \\
~ & ~ & ~ & ~ \\
(a) \quad & \quad (b) \quad & \quad (c) \quad & \quad (d)
\end{tabular}
\end{center}
\end{table}

Next, we show that there is a unique possibilistic $(2,2,2)$ model (up to re-labelling) which satisfies the symmetries (\ref{eq:sym1}) and (\ref{eq:sym2}) and is logically but not strongly nonlocal.
If a model is not strongly nonlocal then there exists at least one global assignment compatible with the model, or in tabular form at least one deterministic grid. Up to re-labelling this is represented in Table \ref{tab:ba} (a). By the symmetry (\ref{eq:sym2}) there must exist a second global assignment, as in Table \ref{tab:ba} (b). It is clear from the configuration of the table that none of the entries that have already been specified can witness a Hardy paradox. If the model is logically nonlocal, therefore, at least one of the unspecified entries in Table \ref{tab:ba} (b) must witness a Hardy paradox. Up to re-labelling, this can be represented as in Table~\ref{tab:ba}~(c). By the symmetry (\ref{eq:sym1}) the table must be completed to Table \ref{tab:ba} (d). This (up to re-labelling) is the only possibilistic empirical model that respects the symmetries and is logically nonlocal without being strongly nonlocal. The question now is whether it can be realised by measurements on $\ket{\phi^+}$.

Consider the measurement statistics for the joint measurement $A_1B_3$ required by Table~\ref{tab:ba}~(d). If these are to arise from quantum observables $A_1$ and $B_3$, then $\braket{\phi^+}{0_10_3} = \braket{\phi^+}{1_11_3} = \frac{1}{\sqrt{2}}$ and $\braket{\phi^+}{0_11_3} = \braket{\phi^+}{1_10_3} = 0$. So, either $\ket{0_1} = \ket{0_3} = \ket{0}$ and $\ket{1_1} = \ket{1_3} = \ket{1}$ up to an overall sign or vice versa. The eigenvectors of both observables are $\{\ket{0},\ket{1}\}$, so they must simply be Pauli $X$ operators (up to a common sign, which would allow for re-labelling the outcomes): 
\begin{equation}\label{eq:baconst1}
A_1 = B_3 = \pm X. 
\end{equation}
A similar argument applies for the joint measurements $A_1B_4$ and $A_2B_4$, showing that
\begin{align}
A_1 &= B_4 = \pm X, \\
A_2 &= B_4 = \pm X. \label{eq:baconst3}
\end{align}
Eqs.~(\ref{eq:baconst1})--(\ref{eq:baconst3}) imply that
\begin{equation*}
A_1= A_2 =B_3=B_4 = \pm X;
\end{equation*}
but therefore the measurement statistics for $A_2B_3$ must be the same as for each of the other joint measurements, and Table~\ref{tab:ba}~(d) is not realised.
This completes the proof that no quantum mechanical logically nonlocal empirical model can be obtained by considering (any number of) local projective measurements on the Bell state.
\end{proof}

Symmetry is important here: the symmetry of the underlying state manifests itself as a symmetry of the probabilities of outcomes for each joint measurement, \eqref{eq:sym1} and \eqref{eq:sym2}. By Theorem~\ref{equivprop}, logical nonlocality also requires a particular relationship between certain probabilities in each of these distributions (a Hardy paradox). However, quantum mechanically, there cannot exist local projective measurements that realise these correlations and respect the symmetries at the same time. On the other hand, there exists a whole family of no-signalling empirical models which are logically nonlocal and respect the symmetries. These are the no-signalling models with support as in Table \ref{tab:ba} (d), along with the PR box.

These models have some interesting properties in their own right \cite{mansfield:13t}: despite not being realisable quantum mechanically, they may lie within the Tsirelson bound, coming arbitrarily close to the local polytope. They can be seen, however, to violate information causality, which has been proposed as a physical principle that might characterise quantum correlations \cite{pawlowski:09} or ``almost quantum'' correlations \cite{navascues:15almost}, by means of the same protocol described in \cite{pawlowski:09}. In fact, similar families of models to this one have already been considered in this context in \cite{allcock:09}.

We also note that Fritz \cite{fritz:11} has considered quantum analogues of Hardy's paradox\index{Hardy's paradox!quantum analogues}. These are not realisable quantum mechanically, but can arise in more general no-signalling empirical models. An interesting point is that Table~\ref{tab:ba}~(d) contains two such paradoxes, and so the fact that any model with this support is not quantum-realisable also follows  more directly from this observation.

\section{Hardy Subsumes Other Paradoxes}\label{sec:chen}

An immediate consequence of Theorem \ref{equivprop} is that in the relevant scenarios Hardy's paradox subsumes all other paradoxes, in the sense that any model which can be demonstrated to be logically nonlocal necessarily contains a Hardy paradox. For instance, the ladder paradox \cite{boschi:97} has been proposed as a generalisation of the original Hardy paradox and was used for experimental tests of quantum nonlocality \cite{barbieri:05}. Up to symmetries, there is one ladder paradox for any number of settings $k$; i.e., for each $(2,k,l)$ scenario. It was observed in \cite{mansfield:11} that, by Theorem \ref{equivprop}, any ladder paradox necessarily contains a Hardy paradox, and, moreover, explicitly demonstrated how this comes about.

Here we consider a more recent proposal by Chen \textit{et al.}~\cite{chen:13} for an alternative generalisation of Hardy's paradox for high-dimensional (qudit) systems (see Table \ref{tab:chen}); this will also be relevant to the discussion in Sec.~\ref{sec:hpcert}. In the present terminology, the argument applies to $(2,2,l)$ Bell scenarios.

\begin{table}
\caption{\label{tab:chen} The Chen \textit{et al.}~paradox occurs when at least one of the starred entries is non-zero. The relevant outcomes for each joint measurement are either those above or those below the diagonal.}
\begin{center}
\begin{tabular}{c|cc>{\centering\arraybackslash}p{0.9cm}c|cccc|}
~ & ~ & ~ & ~ & ~ & ~ & ~ & ~ & ~ \\ \hline
~ & ~ & * & $\cdots$ & * & ~ & 0 & $\cdots$ & 0 \\
~ & ~ & ~ & $\ddots$ & $\vdots$ & ~ & ~ & $\ddots$ & $\vdots$ \\
~ & ~ & ~ & ~ & * & ~ & ~ & ~ & 0 \\
~ & ~ & ~ & ~ & ~ & ~ & ~ & ~ & ~ \\ \hline
~ & ~ & 0 & $\cdots$ & $0$ & ~ & ~ & ~ & ~ \\
~ & ~ & ~ & $\ddots$ & $\vdots$ &$0$ & ~ & ~ & ~ \\
~ & ~ & ~ & ~ & 0 & $\vdots$ & $\ddots$ & ~ & ~ \\
~ & ~ & ~ & ~ & ~ & 0 & $\cdots$ & $0$ & ~ \\ \hline
\end{tabular}
\end{center}
\end{table}

\begin{proposition}\label{prop:chen}
The occurrence of a Chen \textit{et al.}~paradox (Table \ref{tab:chen}) implies the occurrence of a Hardy paradox.
\end{proposition}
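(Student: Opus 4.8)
The cleanest route is to exhibit the Hardy paradox directly inside the Chen \textit{et al.}~model, which is moreover what is needed for the later comparison of paradoxical probabilities; I would also record the one-line alternative via Theorem~\ref{equivprop}.

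First I would read off from Table~\ref{tab:chen} the defining possibilistic conditions of a Chen \textit{et al.}~paradox. Writing $A_1, A_2$ for Alice's measurements, $B_1, B_2$ for Bob's, and labelling outcomes $1, \dots, l$ so that the table's diagonals make sense, these are: (C1) $p(a_0, b_0 \mid A_1 B_1) = 1$ for some $a_0 < b_0$ (one of the starred, above-diagonal entries is non-zero); (C2) $p(a, b \mid A_1 B_2) = 0$ whenever $a < b$; (C3) $p(a, b \mid A_2 B_1) = 0$ whenever $a < b$; (C4) $p(a, b \mid A_2 B_2) = 0$ whenever $a > b$. It is worth recording the underlying logical argument as a sanity check: any local deterministic assignment $\lambda$ consistent with $(a_0, b_0)$ would force $\lambda(B_2) \le \lambda(A_1) = a_0 < b_0 = \lambda(B_1) \le \lambda(A_2) \le \lambda(B_2)$, which is absurd; so the model is logically nonlocal, and since it is a $(2,2,l)$ model Theorem~\ref{equivprop} already gives the conclusion.

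For the explicit construction I would fix any integer $t$ with $a_0 < t \le b_0$ and coarse-grain the outcome set of every measurement into the two blocks $L = \{1, \dots, t-1\}$ and $H = \{t, \dots, l\}$, assigning the two-valued labels as follows: at $A_1$ let $L \mapsto \uparrow,\ H \mapsto \downarrow$; at $B_1$ let $H \mapsto \uparrow,\ L \mapsto \downarrow$; at $A_2$ let $L \mapsto G,\ H \mapsto R$; at $B_2$ let $H \mapsto G,\ L \mapsto R$. A short check then shows the induced $2 \times 2$ possibility table is precisely Hardy's (Table~\ref{hptable}~(a)): the cell $p(\uparrow_A \uparrow_B \mid A_1 B_1)$ contains the outcome $(a_0, b_0)$, hence equals $1$ by (C1); every underlying outcome of the cell $p(\uparrow_A G_B \mid A_1 B_2)$ has Alice's value in $L$ and Bob's in $H$, hence $a < t \le b$, so this cell vanishes by (C2); likewise $p(G_A \uparrow_B \mid A_2 B_1) = 0$ by (C3); and every underlying outcome of $p(R_A R_B \mid A_2 B_2)$ has Alice's value in $H$ and Bob's in $L$, hence $b < t \le a$, so it vanishes by (C4). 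All four coarse-grained blocks are non-empty since $1 \le a_0 < t \le b_0 \le l$. Up to relabelling this is Table~\ref{genpar}, so the model contains a (coarse-grained) Hardy paradox. Since the construction works verbatim for every threshold $t$ straddled by some possible outcome of $p(\cdot \mid A_1 B_1)$, one in fact obtains a whole family of distinct Hardy paradoxes inside the single Chen \textit{et al.}~model --- the point taken up in Sec.~\ref{sec:hpcert}.

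I expect the only real work to be conventional bookkeeping: extracting the correct inequality directions and the strict-versus-weak placement of the diagonals from Table~\ref{tab:chen}, and orienting the four coarse-grainings (reversing one party's outcome order if needed) so that the forbidden and allowed cells match Hardy's configuration exactly rather than some relabelled, genuinely different arrangement. Once the conventions are fixed, the verification is a one-line inequality in each of the four cells.
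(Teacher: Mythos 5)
Your proposal is correct and follows essentially the same route as the paper: both invoke Theorem~\ref{equivprop} as the one-line argument and both give a direct construction in which a non-zero starred entry, together with the zero blocks of the other three boxes, is exhibited as a coarse-grained Hardy paradox. Your version is somewhat more explicit than the paper's (a parametrized threshold $t$, verification of all four cells, and the deterministic-assignment chain justifying logical nonlocality), but the underlying idea and the concluding observation that every starred entry witnesses its own Hardy paradox are the same.
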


\begin{proof}
This follows directly from Theorem \ref{equivprop}, but one can also prove the proposition more directly. Suppose one of the starred entries corresponding to outcomes $(o'_i,o_j)$ of Table \ref{tab:chen} is non-zero. We write $p(i,j)>0$ for short. Then we can see from the table that for the joint measurement represented by the upper-right box, we must have $p(r,j)=0$ for all $r>(l-j)$. Similarly, for the measurement represented by the lower-left box, $p(i,s)=0$ for all $s>(l-i)$. In the lower-right box, we have $p(r,s) = 0$ when $r\leq (l-j)$ and $s\leq (l-i)$. This describes a $(2,2,l)$ Hardy paradox.
\end{proof}

The proof shows that every non-zero starred entry in Table~\ref{tab:chen} witnesses a (coarse-grained) Hardy paradox.

\section{Hardy Nonlocality with Certainty}\label{sec:hpcert}

While Hardy's paradox is considered to be an ``almost probability free'' nonlocality proof, much of the literature on Hardy's paradox has been concerned with the value of the \emph{paradoxical probability} (e.g.~\cite{boschi:97,ghosh:98,choudhary:08,chen:13}); i.e., the probability of obtaining the particular outcome that witnesses a Hardy paradox (Definition \ref{def:hardynl}). This is motivated as being especially relevant for experimental tests. In this section, we will show how Hardy nonlocality can be demonstrated in such a way that even this probability becomes irrelevant.

We note that similar argument was put forward by Cabello \cite{cabello:98}, but stress that the results contained in this section has the advantage of being far simpler, both in terms of the argument and of the empirical models in question.

As previously mentioned, Hardy \cite{hardy:93} prescribed measurements for all entangled two-qubit states (excluding the maximally entangled ones) such that the resulting empirical model contains a Hardy paradox. The maximum paradoxical probability over this family of quantum-realisable empirical models is
\begin{equation}\label{hardybound}\index{Hardy's paradox!bound}
p_{\max} = \frac{5\sqrt{5}-11}{2} \approx 0.09\,.
\end{equation}
A model has also been found for which the tripartite Hardy paradox can be witnessed with probability $0.125$ \cite{ghosh:98}, and in \cite{mermin1995best,cereceda2000quantum,choudhary:08} it is demonstrated that for a generalised no-signalling theory it is possible to witness a $(2,2,2)$ Hardy paradox with probability $0.5$. It was shown that the ladder generalisation of Hardy's paradox could achieve a paradoxical probability approaching $0.5$ for $(2,k,2)$ scenarios, as $k \rightarrow \infty$. For the $(2,2,l)$ scenario, Chen \textit{et al.}~\cite{chen:13} (cf.~Sec.~\ref{sec:chen}) have claimed that it is possible to achieve a paradoxical probability of $\approx 0.4$ in the large $d$ limit for two qu$d$it systems with the paradox presented in Table \ref{tab:chen}. From our Proposition \ref{prop:chen}, it is clear that strictly speaking this comes about by summing the probabilities of witnessing a number of different (coarse-grained) Hardy paradoxes; $(l-1)^2/2$ of these to be precise.

In this section, we use completeness of Hardy nonlocality to achieve a comprehensive improvement on these results, demonstrating by simple arguments that, by considering such a summation of different paradoxical probabilities, Hardy nonlocality can in fact be witnessed with certainty in a tripartite quantum system. This turns out to be demonstrable with the familiar GHZ-Mermin model \cite{greenberger:89,greenberger:90,mermin:90a,mermin:90}. We will first show that the property also holds for a particular no-signalling but non-quantum $(2,2,2)$ empirical model, which turns out to be the PR box.

\begin{proposition}
The PR box witnesses Hardy nonlocality with certainty.
\end{proposition}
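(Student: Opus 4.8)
The plan is to exploit the characterisation of strong nonlocality established in Proposition~\ref{prop:sc} together with the placement of Hardy nonlocality in the hierarchy~\eqref{eq:hierarchy}. First I would recall that "witnessing Hardy nonlocality with certainty" is precisely the statement that \emph{every} $1$ in the possibility table witnesses a Hardy paradox: no matter which joint measurement is performed and no matter which (possible) outcome is observed, that outcome event completes a Hardy configuration. By the Definition of strong nonlocality and Theorem~\ref{equivprop} (which applies to the $(2,2,2)$ scenario), this is exactly the assertion that the PR box is strongly nonlocal.

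So the argument reduces to: the PR box is strongly nonlocal. The cleanest route is to invoke Proposition~\ref{prop:sc} in the forward direction — it was stated there that the PR boxes are the (only) strongly nonlocal no-signalling $(2,2,2)$ models, so in particular they \emph{are} strongly nonlocal. Alternatively, and perhaps more transparently for the reader, I would verify this directly from the possibility table of the PR box (Table~\ref{tab:pr}): take any of the eight $1$'s appearing in the four measurement blocks, and check by inspection that, up to relabelling of measurements and outcomes, it sits in the top-left corner of an arrangement isomorphic to Table~\ref{hptable}~(a) — i.e.\ there is a $0$ forced in its row within the other party's measurement block, a $0$ in its column within the other measurement block, and a $0$ in the diagonally opposite cell of the fourth block. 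Because the PR-box table is symmetric under swapping the two measurement settings and under the relevant outcome relabellings, all eight cases are equivalent to a single check.

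Having established that every $1$ in the PR box's possibility table witnesses a Hardy paradox, the conclusion follows: since for any joint measurement the probability of obtaining \emph{some} outcome is $1$, and since in the PR box each such outcome (being a $1$ in the table) witnesses a Hardy paradox, a Hardy paradox is witnessed with probability $1$. The main obstacle, such as it is, is purely presentational rather than mathematical: one must be careful that the notion "witnesses Hardy nonlocality with certainty" is stated in terms of the \emph{sum} of paradoxical probabilities over the different (relabelled) Hardy paradoxes co-occurring in the one model — exactly the reading of Chen \textit{et al.}'s paradoxical probability identified in Sec.~\ref{sec:chen} — rather than any single fixed Hardy configuration achieving probability $1$, which is impossible. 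With that interpretation fixed, the proof is essentially the observation that strong nonlocality of the PR box, already in hand from Proposition~\ref{prop:sc}, is the same statement.
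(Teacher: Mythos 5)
Your proposal is correct and follows essentially the same route as the paper: the paper's proof likewise appeals to the observation (from the proof of Proposition~\ref{prop:sc}) that every non-zero entry of the PR box table witnesses a Hardy paradox, each with paradoxical probability $\sfrac{1}{2}$, so that for any joint measurement the total probability of witnessing a Hardy paradox is $1$. Your added care in distinguishing the \emph{sum} of paradoxical probabilities from a single paradox achieving probability $1$ matches the paper's intended reading.
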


\begin{proof}
The probabilistic version of the PR box is given in Table \ref{tab:pr}. We have already observed in the proof of Proposition \ref{prop:sc} that every joint outcome that has non-zero probability witnesses a Hardy paradox. Therefore, each non-zero entry in the table represents a joint outcome that witnesses Hardy nonlocality with paradoxical probability $0.5$, and so it is clear that for each joint measurement the probability of obtaining an outcome that witnesses a Hardy paradox is $1$.
\end{proof}

\begin{table}
\caption{\label{tab:pr} The PR box.}
\begin{center}
\begin{tabular}{c|cc|cc|}
~ & ~ & ~ & ~ & ~ \\ \hline
~ & \sfrac{1}{2} & 0 & \sfrac{1}{2} & 0 \\
~ & 0 & \sfrac{1}{2} & 0 & \sfrac{1}{2} \\ \hline
~ & \sfrac{1}{2} & 0 & 0 & \sfrac{1}{2} \\
~ & 0 & \sfrac{1}{2} & \sfrac{1}{2} & 0 \\ \hline
\end{tabular}
\end{center}
\end{table}

It is clear that the PR box achieves the upper bound on paradoxical probabilities for individual Hardy paradoxes in no-signalling models of $0.5$
\cite{mermin1995best,cereceda2000quantum} and provides a more concise example a model saturating the bound than that constructed in \cite{choudhary:08}. Perhaps more importantly, however, we see that since \emph{every} joint outcome witnesses a Hardy paradox in the present example, the arguably more relevant parameter, the probability of witnessing Hardy nonlocality, is actually $1$ for any choice of measurements.

Nevertheless, it is not possible to use this method of summing paradoxical probabilities to witness Hardy nonlocality with higher probability than \eqref{hardybound} for any $(2,k,2)$ empirical model which can be realised by projective measurements on a Bell state.

\begin{proposition}
For any $(2,k,2)$ empirical model arising from projective measurements on a Bell state, the probability of witnessing Hardy nonlocality cannot be improved by summing the paradoxical probabilities for different paradoxes occurring within the same model.
\end{proposition}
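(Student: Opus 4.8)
The plan is to derive this as an immediate corollary of Theorem~\ref{thm:ba}. First I would note that, by Theorem~\ref{equivprop}, logical nonlocality in a $(2,k,2)$ scenario is equivalent to the presence of a Hardy paradox in some $2\times 2$ sub-table (coarse-graining is vacuous when $l=2$). The proof of Theorem~\ref{thm:ba} in fact establishes the stronger fact that \emph{no} $2\times 2$ sub-model of an empirical model obtained by projective measurements on a Bell state contains a Hardy paradox: it quantifies over arbitrary observables $A_1,A_2$ on the first qubit and $B_3,B_4$ on the second, and shows via the eigenvector/amplitude computation that the symmetries \eqref{eq:sym1}--\eqref{eq:sym2} are incompatible with the possibilistic pattern required for a Hardy paradox, forcing the only candidate configuration to collapse to the case in which all four observables equal $\pm X$, which then fails to realise the needed support.

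It follows that for any such model the set of joint outcomes witnessing a Hardy paradox is empty, so there are no paradoxical probabilities to sum: the quantity obtained by adding up the paradoxical probabilities for all paradoxes occurring within the model is $0$. Since $0 < p_{\max} = (5\sqrt5-11)/2$, this summation cannot raise the probability of witnessing Hardy nonlocality above the bound \eqref{hardybound}; indeed this probability is identically zero for every choice of measurements, whether or not one sums. This is in sharp contrast to the Chen \textit{et al.}~and ladder constructions of Sec.~\ref{sec:chen}, where the summation trick has genuinely many paradoxes to act on.

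I do not expect any real obstacle: the entire content is already carried by Theorem~\ref{thm:ba}. The only point requiring a little care is to phrase the argument so that it manifestly covers \emph{every} pair of measurement settings on each side rather than one distinguished $2\times 2$ block; but this is precisely the generality in which Theorem~\ref{thm:ba} is proved. It may also be worth remarking explicitly that the Bell state is of course still probabilistically nonlocal (it maximally violates CHSH), so under projective measurements it sits at the bottom rung of the hierarchy \eqref{eq:hierarchy} — strictly below the Hardy level — which is exactly why there is nothing for the summation to exploit.
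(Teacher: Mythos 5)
Your argument is correct, and it follows a genuinely different (and shorter) route than the paper's. You invoke the \emph{conclusion} of Theorem~\ref{thm:ba}: since a $(2,k,2)$ model obtained from projective measurements on a Bell state is not logically nonlocal, by Theorem~\ref{equivprop} it contains no Hardy paradox in any $2\times2$ sub-table, so the collection of paradoxical probabilities to be summed is empty and the probability of witnessing Hardy nonlocality is identically zero --- the proposition holds vacuously. The paper instead reuses the \emph{intermediate classification} from the proof of Theorem~\ref{thm:ba}: after reducing to $(2,2,2)$ sub-models, it observes that summation can only help if some single joint measurement has at least two outcomes each witnessing a Hardy paradox; under the Bell-state symmetries \eqref{eq:sym1}--\eqref{eq:sym2} the only possibilistic supports with that property are the PR box and the Table~\ref{tab:ba}~(d) family, and the eigenvector argument already given shows neither is realisable on $\ket{\phi^+}$. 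Your version is logically airtight and exposes the proposition as an immediate corollary (indeed essentially trivial) once Theorem~\ref{thm:ba} is in hand; the paper's version buys some extra structural content, namely an identification of exactly which no-signalling models a summation argument \emph{could} exploit, which ties into the surrounding discussion of the PR box and the Table~\ref{tab:ba}~(d) family and explains why witnessing Hardy nonlocality with certainty in $(2,2,2)$ is reserved for non-quantum models. The one caveat for your write-up is presentational rather than logical: since your reading makes the statement vacuous, you should flag (as you do in your final remark) that the Bell state is nonetheless probabilistically nonlocal, so the proposition is not empty of physical content --- it records that the summation trick, which rescues the paradoxical probability for the Chen \emph{et al.}\ and GHZ models, has nothing to act on here.
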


\begin{proof}
First, we note that it suffices to prove the proposition for $(2,2,2)$ models, since a $(2,k,2)$ model contains a Hardy paradox if and only if some $(2,2,2)$ sub-model contains a Hardy paradox. In order to obtain an improvement in the probability of witnessing Hardy nonlocality it would have to be the case that, for some joint measurement, more than one Hardy paradox could be witnessed. Working in the present framework, it is clear that any such empirical model is either the PR box or belongs to the family of models with support given by Table~\ref{tab:ba}~(d), up to re-labelling of measurements and outcomes, as discussed in the proof of Theorem~\ref{thm:ba}. Indeed, in this family, for the joint measurement $A_2B_3$, the probability of witnessing Hardy nonlocality is $1$. However, it was also shown in the proof of Theorem~\ref{thm:ba} that no model in the family is quantum-realisable.
\end{proof}


We now consider the $(3,2,2)$ empirical model used in the GHZ-Mermin logical nonlocality proof \cite{greenberger:90,mermin:90a}. It should be noted that that the original nonlocality argument based on this empirical model was not of the tripartite Hardy form mentioned in Sec.~\ref{sec:background}. Here, we need only consider a subset of the measurement contexts, shown in Table \ref{tab:ghz} in more orthodox notation, and three-dimensional representation in Fig.~\ref{fig:ghz3d} (a).
\begin{table}
\caption{\label{tab:ghz} The relevant portion of the GHZ-Mermin possibilistic empirical model. The suppressed rows of the table $\{XXY,XYX,YXX,YYY\}$ have full support. See Fig.~\ref{fig:ghz3d}~(a) for the three-dimensional representation of the model.}
\begin{center}
\begin{tabular}{p{3pt}p{3pt}l|cccccccc}
&~&~&  $000$ & $001$ & $010$ & $011$  & $100$ & $101$ & $110$ & $111$  \\ \hline
$X$&$X$&$X$ & $1$ & $0$ & $0$ & $1$ & $0$ & $1$ & $1$ & $0$  \\
$X$&$Y$&$Y$ & $0$ & $1$ & $1$ & $0$ & $1$ & $0$ & $0$ & $1$  \\
$Y$&$X$&$X$ & $0$ & $1$ & $1$ & $0$ & $1$ & $0$ & $0$ & $1$  \\
$Y$&$Y$&$X$ & $0$ & $1$ & $1$ & $0$ & $1$ & $0$ & $0$ & $1$ 
\end{tabular}
\end{center}
\end{table}

\drawSmallNKLCubes{	
	\pointAt{1}{0}{3}{red!50}
	\pointAt{0}{1}{3}{red!50}
	\pointAt{0}{0}{2}{red!50}
	\pointAt{1}{1}{2}{red!50}
	
	\pointAt{3}{0}{0}{red!50}
	\pointAt{2}{1}{0}{red!50}
	\pointAt{2}{0}{1}{red!50}
	\pointAt{3}{1}{1}{red!50}
	
	\pointAt{0}{2}{1}{red!50}
	\pointAt{1}{3}{1}{red!50}
	\pointAt{0}{3}{0}{red!50}
	\pointAt{1}{2}{0}{red!50}
	
	\pointAt{2}{2}{3}{red!50}
	\pointAt{3}{3}{3}{red!50}
	\pointAt{2}{3}{2}{red!50}
	\pointAt{3}{2}{2}{red!50}
}{
	\pointAt{3}{3}{0}{blue!50}
	
	\pointAt{1}{1}{2}{red!50}
	
	\pointAt{3}{0}{0}{red!50}
	
	\pointAt{0}{3}{0}{red!50}
	
	\pointAt{3}{3}{3}{red!50}
}{\label{fig:ghz3d} (a) The GHZ model. We represent only the red, impossible outcomes; all other entries are possible. (b) Hardy's paradox within the GHZ model; the blue outcome is possible.}

\begin{proposition}\label{prop:hghz}
The GHZ model witnesses Hardy nonlocality with certainty.
\end{proposition}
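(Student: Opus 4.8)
The plan is to show that the GHZ--Mermin model is strongly nonlocal, which is the assertion that it witnesses nonlocality with certainty, and then to use the three-dimensional representation to see that the witnesses are Hardy paradoxes.

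For the strong nonlocality I would run the standard GHZ/Mermin parity argument at the possibilistic level. A deterministic grid assigns definite outcomes, say in $\{+1,-1\}$, to each of the six observables $X_1,Y_1,X_2,Y_2,X_3,Y_3$; being a global assignment compatible with the possibilistic data of Table~\ref{tab:ghz} (product $+1$ of the three outcomes in context $XXX$, product $-1$ in each of $XYY$, $YXY$, $YYX$) it would satisfy four sign identities. Multiplying them, each $X_i$ and each $Y_i$ occurs exactly twice, so the left side is $+1$, while the right side is $(+1)(-1)(-1)(-1)=-1$. Hence there is no deterministic grid at all, so certainly no $1$ in the three-dimensional table extends to one. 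By the evident $n=3$ version of the characterisation of locality recalled in Sec.~\ref{sec:background}, the GHZ model is strongly nonlocal; i.e.\ every joint measurement yields, with probability $1$, an outcome that cannot be completed to a deterministic grid.

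To see that these witnesses are of Hardy type I would pass to the three-dimensional picture and exploit the symmetry of the GHZ model, whose symmetry group contains all permutations of the three parties together with the simultaneous relabelings of the two measurements, and of the two outcomes, at any even number of sites. Fig.~\ref{fig:ghz3d}(b) already displays a tripartite Hardy paradox inside the model (its one possible and four impossible sub-cubes), and since a symmetry carries Hardy paradoxes to Hardy paradoxes, applying the group to this configuration accounts for the possible outcomes: for a representative of each orbit one reads off directly from the cube the required arrangement of one possible sub-cube and four impossible sub-cubes, the latter lying in the constrained contexts $XXX,XYY,YXY,YYX$. Consequently the certainty supplied by strong nonlocality is, concretely, the certainty of witnessing a tripartite Hardy paradox, so — as with the PR box — the summed paradoxical probability attains $1$, surpassing both the single-paradox quantum bound \eqref{hardybound} and the $\approx 0.4$ of the Chen \textit{et al.}~construction.

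The step I expect to be delicate is the last one: the classic GHZ contradiction only gives that each possible outcome fails to extend to a deterministic grid (logical nonlocality), and upgrading this to the statement that each possible outcome heads a tripartite Hardy paradox requires the orbit bookkeeping plus, for each orbit representative, checking that the four sub-cubes prescribed by the Hardy template are genuinely impossible in the GHZ model. The three-dimensional representation is what should keep this verification short, exactly as it does for the single paradox of Fig.~\ref{fig:ghz3d}(b).
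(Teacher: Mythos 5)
Your opening step is sound: the parity argument does show that no deterministic grid is compatible with Table~\ref{tab:ghz}, so the GHZ model is strongly nonlocal and every possible outcome of every context is logically nonlocal. But this is not yet the proposition. Since Theorem~\ref{equivprop} does \emph{not} hold for $(3,2,2)$ scenarios, ``cannot be completed to a deterministic grid'' does not entail ``heads a tripartite Hardy paradox'', so the entire content of the statement lives in the step you defer to ``orbit bookkeeping'' --- and that step, as sketched, does not go through. The transformations you name are not symmetries of the model: relabelling the two measurements together with the two outcomes at two sites carries the context $XXX$ (whose support is the even-parity outcomes) onto the context $YYX$ (whose support is the odd-parity outcomes), so the possibility table is not preserved. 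Under the more charitable reading where you only flip the outcome labels at an even number of sites, those maps \emph{are} symmetries, but together with party permutations the orbit of the paradox of Fig.~\ref{fig:ghz3d}(b) meets the $YYY$ context only in the four odd-parity outcomes $(1,1,1),(1,0,0),(0,1,0),(0,0,1)$. The four even-parity outcomes are never reached, and the paradoxes they head have a genuinely different shape: the \emph{same} outcome is forbidden in each of $XYY,YXY,YYX$ and the \emph{complementary} outcome is forbidden in $XXX$, rather than the singly-flipped outcomes as in Fig.~\ref{fig:ghz3d}(b). A symmetry argument can be repaired --- the relevant extra generators are per-observable relabellings such as flipping the outcome of $Y$ at all three sites, or flipping the outcome of $Y_i$ together with those of $X_j$ and $X_k$, which do preserve every support and do act transitively on the $YYY$ outcomes --- but you neither identify these nor verify transitivity, so the claim that every possible outcome witnesses a Hardy paradox is not established.

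For comparison, the paper does not route through strong nonlocality at all: it exhibits the single paradox of Fig.~\ref{fig:ghz3d}(b) and then verifies directly, by a two-case parity count on the outcome string $\mathbf{o}$ of the context $YYY$ (even versus odd number of $1$'s), that each of the eight outcomes of $YYY$ heads a tripartite Hardy paradox; this is Proposition~\ref{prop:ghznhardy} of the appendix specialised to $n=3$. Since $YYY$ has full support, certainty follows from that single context, with no need for the global parity contradiction or for any orbit analysis.
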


\begin{proof}
The three-dimensional representation makes it easy to identify a tripartite Hardy paradox, which is shown in Fig.~\ref{fig:ghz3d}~(b). It can also be expressed as follows.
\vbox{
\begin{itemize}
\item
$p(\,1,1,1\,\mid\,Y,Y,Y\,) > 0$
\item
$p(\,1,1,0\,\mid\,Y,Y,X\,) = 0$ \\
$p(\,1,0,1\,\mid\,Y,X,Y\,) = 0$ \\
$p(\,0,1,1\,\mid\,X,Y,Y\,) = 0$
\item
$p(\,0,0,0\,\mid\,X,X,X\,) = 0$
\end{itemize}
}
Up to re-labelling, this is the form of the $n$-partite Hardy paradox we met in Sec.~\ref{sec:background}.
Moreover, it can similarly be demonstrated that any joint outcome for the measurement context $YYY$ witnesses a Hardy paradox (this may be seen by inspection, but a detailed and more general treatment can also be found in the proof of Proposition~\ref{prop:ghznhardy} in the appendix to this article).
The paradoxical probability is
\[
p_\text{paradox} = p(1,1,1\mid Y,Y,Y) =0.125\, .
\]
However, since every outcome to the measurement $YYY$ witnesses some Hardy paradox, then it is again the case that the combined probability of witnessing Hardy nonlocality is $1$.
\end{proof}

This provides a much simpler tripartite Hardy argument than that of Ghosh, Kar and Sarkar \cite{ghosh:98}, using a simpler empirical model (theirs also used the GHZ state, but with alternative measurements on this state), while still obtaining the same value of $0.125$ for the individual paradoxical probabilities. Again, perhaps more importantly, in our model every possible outcome event for the joint measurement $YYY$ witnesses some Hardy paradox, and therefore Hardy nonlocality is witnessed with certainty. The model considered here is exactly the GHZ-Mermin model, given that the observables available at each subsystem are simply the $X$ and $Y$ operators. As a result, it can be said that the GHZ experiment \cite{greenberger:90} witnesses Hardy nonlocality with certainty.

\begin{corollary}
The GHZ experiment \cite{greenberger:90} witnesses Hardy nonlocality with certainty.
\end{corollary}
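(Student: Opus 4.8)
The plan is to derive the corollary directly from Proposition~\ref{prop:hghz} by identifying the empirical model analysed there with the one physically realised in the GHZ experiment of \cite{greenberger:90}. Recall that in that experiment the tripartite state
\[
\ket{\text{GHZ}} = \frac{1}{\sqrt 2}\left(\ket{000} + \ket{111}\right)
\]
is prepared and each of the three parties measures one of the two Pauli observables $X$ or $Y$ on their qubit. The first step is to fix labelling conventions: at each site identify the outcome $0$ with the $+1$ eigenvalue and $1$ with the $-1$ eigenvalue of the chosen observable, writing $\ket{\pm}$ and $\ket{\pm i}$ for the eigenstates of $X$ and $Y$ respectively.

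Next I would compute the joint amplitudes $\braket{\text{GHZ}}{e_1 e_2 e_3}$, where each $\ket{e_j}$ ranges over the appropriate $X$- or $Y$-eigenstate, for the four even-parity Mermin contexts $XXX$, $XYY$, $YXY$, $YYX$. Writing each eigenstate as $\ket{e_j} = \tfrac{1}{\sqrt 2}(\ket 0 + \omega_j \ket 1)$ with $\omega_j \in \{\pm 1, \pm i\}$, one finds $|\braket{\text{GHZ}}{e_1 e_2 e_3}|^2 = \tfrac{1}{8}\,|1 + \omega_1\omega_2\omega_3|^2$, which is $\tfrac12$ when $\omega_1\omega_2\omega_3 = +1$ and $0$ otherwise; this is precisely the standard GHZ/Mermin calculation, and for each of these four contexts exactly half the joint outcomes are impossible. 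Passing to the possibilistic collapse, the resulting table for these contexts coincides — after the chosen relabelling — with Table~\ref{tab:ghz}, while the remaining contexts $\{XXY, XYX, YXX, YYY\}$ are easily checked to have full support, exactly as stated there.

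Once the GHZ-experiment empirical model has been identified with (the possibilistic collapse of) the model of Proposition~\ref{prop:hghz}, the corollary is immediate: by that proposition every joint outcome of the $YYY$ context witnesses a tripartite Hardy paradox, so the probability of obtaining an outcome that witnesses Hardy nonlocality is $1$ for the $YYY$ measurement, and the same holds for every other context. I do not expect a genuine obstacle here; the only delicate point is bookkeeping — ensuring that the identification of $\pm 1$ eigenvalues with the outcome labels $0,1$ and of $X,Y$ with the measurement labels is carried out consistently so that the computed table matches Table~\ref{tab:ghz} (or, equally acceptably, a relabelled variant of it, since containing a Hardy paradox and witnessing Hardy nonlocality with certainty are invariant under relabelling of measurements and outcomes). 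Alternatively, one may simply remark that these are exactly the well-known GHZ-Mermin correlations obtained from $X$ and $Y$ measurements and invoke Proposition~\ref{prop:hghz} in a single line.
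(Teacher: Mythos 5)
Your proposal is correct and follows essentially the same route as the paper: the paper's justification for this corollary is precisely the observation, stated in the paragraph preceding it, that the model of Proposition~\ref{prop:hghz} \emph{is} the GHZ--Mermin model arising from $X$ and $Y$ measurements on the GHZ state, so the corollary is immediate from that proposition; your explicit amplitude computation just makes this identification concrete. One small caveat: your closing remark that ``the same holds for every other context'' is not supported and is in fact false for the $XXX$ context, since any tripartite Hardy paradox based at $XXX$ would require a zero entry in the $YYY$ context, which has full support --- but this claim is not needed, as witnessing with certainty only requires the single context $YYY$.
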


Mermin gave logical nonlocality proofs for $n$-partite generalisations of the GHZ state \cite{mermin:90b} for all $n>2$. Again, his arguments were not of the Hardy form, but we can generalise Proposition \ref{prop:hghz} to some of the GHZ($n$) models (see the appendix).
%
%
%
%

\section{Measurement Incompatibility Is Sufficient For Logical Nonlocality}

In \cite{wolf:09} it was shown that a pair of observables are incompatible in the sense of \emph{not} being jointly observable if and only if they enable a Bell inequality violation. Subsequent works have also considered how the degree of incompatibility relates to the degree of nonlocality \cite{oppenheim:10,banik:13}. Here, we show that, in the basic case of projective or sharp measurements, incompatibility is necessary and sufficient for logical nonlocality \footnote{The idea for Proposition~\ref{prop:incomp} arose from discussions with Leon Loveridge. Analysis of the unsharp case will be considered in future work.}.

\begin{proposition}\label{prop:incomp}
A pair of projective measurements enables a logical nonlocality argument if and only if it is incompatible.
\end{proposition}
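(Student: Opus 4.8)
The plan is to put the given pair $\{A_1,A_2\}$ at one site (Alice's), leaving the shared state and the measurements at the other site free to be chosen, and to use Theorem~\ref{equivprop} to replace ``logically nonlocal'' by ``contains a Hardy paradox''; since a Hardy paradox occupies a single $2\times2$ subtable it suffices to treat $(2,2,2)$ models. I also use the standard fact that for sharp (projective) observables joint measurability coincides with commutativity, so that for such a pair ``incompatible'' means exactly ``non-commuting''.

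\emph{Necessity.} If $A_1$ and $A_2$ commute they have a common refinement $C$ (again a projective measurement, namely the family of products of their spectral projections), and in any bipartite experiment in which Alice chooses between $A_1$ and $A_2$ her side can be simulated by measuring $C$ and relabelling its outcome according to which of $A_1,A_2$ was nominally chosen. The empirical model is then a classical post-processing on Alice's side of a model with only the single Alice-setting $C$; the latter is local (take the hidden variable to be the $C$-outcome), and post-processing preserves locality, so the model is not logically nonlocal. Contrapositively, if $\{A_1,A_2\}$ enables a logical nonlocality argument then it is incompatible --- the logical counterpart of the easy half of the result of Wolf, Peres-Garcia and Fernandez~\cite{wolf:09}.

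\emph{Sufficiency.} Suppose $A_1,A_2$ are incompatible, hence non-commuting. Then some spectral projection $P$ of $A_1$ fails to commute with some spectral projection $Q$ of $A_2$; and since coarse-graining of outcomes sends local models to local models, it is enough to realise a Hardy paradox for the two-outcome projective measurements $\{P,I-P\}$ and $\{Q,I-Q\}$: any such model is a coarse-graining of the model built with the original pair (same state, same partner measurements), whose logical nonlocality it therefore entails. By Jordan's lemma $P$ and $Q$ are simultaneously block-diagonal with blocks of dimension at most two, and since $[P,Q]\neq 0$ there is a two-dimensional block $\mathcal{H}_0$ on which they restrict to rank-one projectors with Bloch vectors $\hat a\neq\pm\hat b$; taking the shared state supported on $\mathcal{H}_0\otimes\mathbb{C}^2$ reduces us to the case in which Alice holds two genuinely non-commuting qubit projective measurements. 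It remains to produce a two-qubit state $\ket{\psi}$ and two qubit measurements $B_3,B_4$ for Bob meeting the four conditions of Definition~\ref{def:hardynl}. Choosing coordinates on $\mathcal{H}_0$ so that $\hat a=\hat z$ and $\hat b$ sits at polar angle $2\gamma$ with $\gamma\in(0,\pi/2)$, the three ``impossible'' conditions successively pin down $B_4$, then $B_3$, then a single linear constraint on $\ket{\psi}$; one checks by a short computation that this is consistent precisely when $\hat a\neq\pm\hat b$ and that the surviving ``possible'' condition is then automatic. For example
\[
\ket{\psi}\;\propto\;\ket{00}-\cot\gamma\,\ket{10}+\ket{11}
\]
with the induced $B_3,B_4$ works, and correctly degenerates to the Bell state as $\gamma\to\pi/2$, for which Theorem~\ref{thm:ba} forbids a Hardy paradox. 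This model contains a Hardy paradox and so, by Theorem~\ref{equivprop}, is logically nonlocal; hence $\{A_1,A_2\}$ enables a logical nonlocality argument.

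The substance is concentrated in the last step: the reduction to qubits via Jordan's lemma is routine, but one must run the parametrised Hardy construction for an \emph{arbitrary} angle between $\hat a$ and $\hat b$ and verify that no degeneracy intervenes unless the two measurements commute --- in effect, Hardy's two-qubit construction~\cite{hardy:93} is being run backwards, fixing one party's measurement pair in advance and solving for the state and the other party's measurements rather than the reverse.
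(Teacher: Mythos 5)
Your sufficiency argument is essentially the paper's: extract a pair of non-commuting spectral projections, coarse-grain to the two-outcome observables they define, and realise a Hardy paradox by running Hardy's two-qubit construction with Alice's two measurement directions fixed in advance. The paper does this last step by reading $\alpha,\beta$ off from $\ket{d}=\alpha\ket{u}+\beta\ket{u_\perp}$ and taking the state of Eq.~(\ref{state}) directly from \cite{hardy:93}, rather than re-solving for the state and the partner's measurements as you do, so you could simply cite that result for the step you leave as ``a short computation.'' You go beyond the paper's proof in two respects, both correct and worth keeping: you actually prove the necessity direction (commuting $\Rightarrow$ jointly measurable $\Rightarrow$ Alice's side is a post-processing of a single measurement $\Rightarrow$ local), which the paper's proof omits entirely, and you justify via Jordan's lemma the reduction from arbitrary non-commuting spectral projections to rank-one qubit projections, a step the paper elides by writing $P=\ketbra{u}{u}$ as if the projections were automatically rank one. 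One small remark: neither direction really needs Theorem~\ref{equivprop} --- sufficiency only uses the trivial implication that a Hardy paradox witnesses logical nonlocality, and your necessity argument is direct --- which is just as well, since if Bob is allowed arbitrarily many many-outcome measurements the scenario falls outside the $(2,k,2)$ and $(2,2,l)$ cases that the theorem covers.
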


\begin{proof}
In \cite{hardy:93}, it was shown that any non-maximally entangled two-qubit pure state can be written in the form
\begin{equation}\label{state}
\ket{\Psi} = N \left( -\alpha^* \beta^* \ket{u u_\perp} -\alpha^* \beta^* \ket{u_\perp u} + \alpha^2 \ket{u_\perp u_\perp}\right)
\end{equation}
for some orthonormal basis $\{ \ket{u}, \ket{u_\perp} \}$ and complex $\alpha,\beta$ such that $\alpha^2 + \beta^2 = 1$ and $\alpha^2>0$, where $N$ is simply a normalisation factor. Logical nonlocality in the form of the Hardy paradox is realised by local projective measurements on each qubit in the directions $\ket{u}$ and $\ket{d} := \alpha \ket{u} + \beta \ket{u_\perp}$.

A pair of non-commuting Hermitian operators has at least one pair of non-commuting spectral projections, say $P=\ketbra{u}{u}$ and $Q=\ketbra{d}{d}$ for some $\ket{u}$ and $\ket{d}$. For the moment let us not assume any relation to the vectors considered in the previous paragraph. The projections are used to build a pair of non-commuting two-outcome observables $\tilde{P}:=2P-\mathbb{1}$ and $\tilde{Q}:=2Q-\mathbb{1}$. Essentially, these correspond to course-graining the probabilities of all outcomes not corresponding to $\ket{u}$ or $\ket{d}$, respectively.
We may assume that $\ket{d} = \alpha \ket{u} + \beta \ket{u_\perp}$ for some $\ket{u_\perp}$ orthogonal to $\ket{u}$ and complex $\alpha,\beta$ such that $\alpha^2+\beta^2=1$ and $\alpha^2 > 0$, for otherwise the projections $P$ and $Q$ would commute. Now suppose we have a bipartite system in which each party may choose to measure $P$ or $Q$. Having defined $\ket{u}$ and $\ket{u_\perp}$ in this way, logical nonlocality in the form of a coarse-grained Hardy paradox is realised on the entangled state specified by Eq.~(\ref{state}).
\end{proof}

\section{Conclusion}


Theorem~\ref{equivprop}, which combines the completeness results proved by the author and Fritz in \cite{mansfield:11}, has been seen in this article to lead to an abundance of consequences and applications which we now briefly recap.

The polynomial algorithms for deciding logical nonlocality in $(2,2,l)$ and $(2,k,2)$ scenarios given in Sec.~\ref{sec:complexity} are of particular relevance since the problem is known in general to be NP-complete \cite{abramsky:13b}. Further scenarios have been shown to be tractable elsewhere \cite{simmons2016computational}.

It was already known that PR boxes are the only strongly nonlocal $(2,2,2)$ models \cite{abramsky:11,lal:13}, but the proof obtained in Sec.~\ref{sec:pr} provides more insight than the previously existing computational proof: in particular it is seen that the result is a straighforward consequence of the completeness of Hardy nonlocality and the property of no-signalling.

Given that all $n$-partite entangled states admit logical nonlocality proofs via projective measurements for $n>2$ \cite{abramsky:16hardy}, Theorem~\ref{thm:ba} establishes the rather surprising fact that in this respect the Bell states are uniquely anomalous in the landscape of entangled states \footnote{Moreover, using an independent result due to Brassard and M\'ethot \cite{brassard:13} it may be deduced that the anomaly persists in the case that POVMs are permitted.}.

The paradoxical probability has often been viewed as an indicator of the quality of Hardy and logical nonlocality, and the issue of optimising this figure in various systems has been widely considered in the literature. The results of Sec.s~\ref{sec:chen} and \ref{sec:hpcert} provide a clarifying perspective on this issue. In particular it is seen that certain logical nonlocality proofs which claim to achieve high paradoxical probabilities are rather summing the paradoxical probabilities of numerous Hardy paradoxes which are present. It may indeed be argued that this total probability of witnessing logical nonlocality is a more relevant indicator and potential measure of the quality of logical nonlocality.
If we accept this, it casts the issue of optimisation in a rather new light, since we have seen that Hardy nonlocality can be achieved with certainty in a tripartite quantum system: something which in fact is already verified by the GHZ experiment. Moreover, the property of witnessing logical nonlocality \emph{with certainty} is understood to be equivalent to the property of strong nonlocality.

While previous works have considered how measurement incompatibility relates to nonlocality in terms of Bell inequality violations, Proposition~\ref{prop:incomp} provides initial progress on the question of how incompatibility relates to other classes of nonlocality in the qualitative hierarchy, which will be a topic for future research.

As a final open question, we note that a correspondence has been established between possibilistic empirical models and relational database theory~\cite{abramsky:12c}. It remains to be explored whether Theorem~\ref{equivprop} might find applications in database theory, or indeed whether similar results already exist in that field that might lead to further insights in the study of nonlocality.

\section*{Acknowledgements}
The author thanks Samson Abramsky, Rui Soares Barbosa, Tobias Fritz, Lucien Hardy, Ray Lal, Leon Loveridge, Andrew Simmons and Jamie Vicary for comments and discussions, Johan Paulsson for invaluable help with figures, and gratefully acknowledges financial support from the Fondation Sciences Math\'{e}matiques de Paris, postdoctoral research grant eotpFIELD15RPOMT-FSMP1, Contextual Semantics for Quantum Theory. This work was partially carried out at l'Institut de Recherche en Informatique Fondamentale, Universit\'e Paris Diderot - Paris 7, the Simons Institute, University of California, Berkeley as the Logical Structures in Computation programme, and the Department of Computer Science, University of Oxford.


\section*{Appendix: GHZ($n$)}

Mermin gave logical nonlocality proofs for $n$-partite generalisations of the GHZ state \cite{mermin:90b} for all $n>2$. These arguments are not of the Hardy form, but we will now show how to generalise Proposition \ref{prop:hghz} to some of the GHZ($n$) models.

The GHZ($n$) states are:
\begin{equation}\label{eq:ghznstate}
\ket{\text{GHZ}(n)} := \frac{1}{\sqrt{2}} \left( \ket{0 \cdots 0} + \ket{1\cdots 1}\right),
\end{equation}
where $n$ is the number of qubits. Note that for $n=2$ the state obtained would be the $\ket{\phi^+}$ Bell state. For $n>2$, Mermin considered models in which each each party can make Pauli $X$ or $Y$ measurements. With a little calculation, it is possible to concisely describe the resulting empirical models in a logical form \footnote{The calculations which follow are based on notes from a private communication with Samson Abramsky}.

The eigenvectors of the $X$ operator are
\begin{equation}\label{eq:evx}
\ket{0_x} = \frac{1}{\sqrt{2}} \left( \ket{0} + e^{i0}\ket{1} \right), \qquad \ket{1_x} = \frac{1}{\sqrt{2}} \left( \ket{0} + e^{i \pi}\ket{1} \right).
\end{equation}
The vector $\ket{0_x}$ has eigenvalue $+1$ and the vector $\ket{1_x}$ has eigenvalue $-1$. These are more usually denoted $\ket{+}$ and $\ket{-}$, respectively, but we use an alternative notation to agree with the $\{0,1\}$ labelling of outcomes used in this article. The phases have been made explicit since they will play the crucial role in the following calculations. Similarly, the $+1$ and $-1$ eigenvectors of the $Y$ operator are
\begin{equation}\label{eq:evy}
\ket{0_y} = \frac{1}{\sqrt{2}} \left( \ket{0} + e^{i \pi/2}\ket{1} \right), \quad \ket{1_y} = \frac{1}{\sqrt{2}} \left( \ket{0} + e^{-i \pi/2}\ket{1} \right).
\end{equation}

The various probabilities for these quantum-mechanical empirical models can be calculated as
\[
\left| \braket{\text{GHZ}(n)}{v_1 \dots v_n} \right|^2 ,
\]
where the $v_i$ are the appropriate eigenvectors. This evaluates to
\begin{equation}\label{eq:ghznphase}
\left| \frac{1+ e^{i \phi}}{\sqrt{2^{n+1}}} \right|^2 = \frac{1}{2^n}\left( 1 + \cos \phi \right),
\end{equation}
where $\phi$ is the sum of the phases of the $v_i$. From the phases of the possible eigenvectors, (\ref{eq:evx}) and (\ref{eq:evy}), it is clear that we must have $\phi = k \,\pi/2$ for some $k \in \mathbb{Z}_4$, the four element cyclic group. For $k = 0 \bmod{4}$, the probability will be $\frac{1}{\sqrt{2^{n-1}}}$; for $k = 1$ or $3 \bmod{4}$ the probability will be $\frac{1}{\sqrt{2^n}}$; and for $k = 2 \bmod{4}$ the probability will be zero.

We can now reduce the calculation of probabilities for any such model to a simple counting argument. If $k_{0_x}$ is the number of $\ket{0_x}$ eigenvectors, $k_{1_x}$ is the number of $\ket{1_x}$ eigenvectors, and so on, then
\begin{align*}
k &= k_{0_y} + 2 \cdot k_{1_x} + 3 \cdot  k_{1_y}  \pmod{4} \\
&= \left( k_{0_y} +k_{1_y} \right) + 2\cdot\left( k_{1_x} +k_{1_y} \right) \pmod{4}.
\end{align*}

\begin{itemize}
\item
For contexts containing an odd number of $Y$'s, every outcome is possible with equal probability $\frac{1}{\sqrt{2^n}}$, since $k = 1$ or $3 \bmod{4}$.
\item
For contexts containing $0 \bmod{4}$ $Y$'s, outcomes are possible if and only if they contain an even number of $1$'s. For these outcomes, $k = 0 \bmod{4}$ and the probabilities are $\frac{1}{\sqrt{2^{n-1}}}$. If there were an odd number of $0$'s in the outcome then $k = 2 \bmod{4}$ and the probability would be $0$.
\item
Similarly, for contexts that contain $2\text{ mod }4$ $Y$'s, outcomes are possible if and only if they contain an odd number of $1$'s. Again, the non-zero probabilities are $\frac{1}{\sqrt{2^{n-1}}}$.
\end{itemize}

Though the probabilities are seen to be easily be calculated in this way, we need only concern ourselves with the possibilistic information in what follows.

\begin{proposition}\label{prop:ghznhardy}
All GHZ($n$) models for $n = 3 \bmod 4$ witness an $n$-partite Hardy paradox with certainty.
\end{proposition}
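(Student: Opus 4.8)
The plan is to prove that the single measurement context $(Y,\dots,Y)$ witnesses an $n$-partite Hardy paradox with certainty, in the sense that \emph{every} one of its joint outcomes is the possible vertex of such a paradox present in the model. Since $n\equiv 3\pmod 4$ is odd, the context $(Y,\dots,Y)$ contains an odd number of $Y$'s, so by the counting argument preceding the proposition it has full support: all $2^n$ of its outcomes are possible, and it suffices to show each of them witnesses a paradox. The three possibilistic rules I would use throughout, all read straight off that counting argument, are: (i) all-$Y$ contexts (odd number of $Y$'s) have full support; (ii) a context obtained from $(Y,\dots,Y)$ by changing exactly one $Y$ to an $X$ has $n-1\equiv 2\pmod 4$ many $Y$'s, so an outcome is possible there iff it has an odd number of $1$'s; (iii) the all-$X$ context has no $Y$'s, so an outcome is possible there iff it has an even number of $1$'s.

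Fix an outcome $\mathbf{o}=(o_1,\dots,o_n)$ of $(Y,\dots,Y)$ and put $W:=|\{j:o_j=1\}|$ and $w_j:=|\{k\neq j:o_k=1\}|=W-o_j$ (reading $o_j\in\{0,1\}$ as an integer). For the $n$-partite Hardy paradox I take the ``$0$-measurement'' to be $Y$ and the ``$1$-measurement'' to be $X$ at every site, relabel the outcomes of $Y$ at site $j$ so that $o_j$ is sent to $0$, and relabel the outcomes of $X$ at site $j$ by the transposition $0\leftrightarrow 1$ if $w_j$ is even and by the identity if $w_j$ is odd. With this relabelling, the event required to be possible sits at the all-$Y$ context with outcome $\mathbf{o}$, which holds by (i); and for each site $i$ the event required to be impossible --- site $i$ switched to $X$, all other sites on $Y$, with the prescribed outcomes --- has, in the original labelling and by the choice of the $X$-relabelling at site $i$, an even number of $1$'s, hence is impossible by (ii). So the first $n+1$ conditions hold for every $\mathbf{o}$.

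What remains is the ``diagonal'' condition, that the all-$X$ context with all relabelled outcomes equal to $0$ is impossible. Unwinding the relabellings, this event has, in the original labelling, a number of $1$'s equal to $|\{j:w_j\text{ even}\}|$; a one-line case split on the parity of $W$, using $w_j=W-o_j$, shows this count equals $n-W$ when $W$ is even and $W$ when $W$ is odd, which is odd in both cases because $n$ is odd, so the event is impossible by (iii). I expect this last parity bookkeeping to be the only real obstacle: one must track the per-site outcome relabellings carefully, and the same kind of computation shows that for $n\equiv 1\pmod 4$ the diagonal event comes out possible instead, so the hypothesis $n\equiv 3\pmod 4$ is genuinely used rather than merely $n$ being odd. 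For $n=3$ the construction recovers the claim used in the proof of Proposition~\ref{prop:hghz}.
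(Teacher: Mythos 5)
Your proof is correct and follows essentially the same route as the paper's: fix the full-support context $(Y,\dots,Y)$, take $Y$ as measurement $0$ and $X$ as measurement $1$, and use the mod-$4$ counting rules to verify the $n$-partite Hardy conditions for every outcome $\mathbf{o}$. The only difference is cosmetic --- the paper splits into cases according to the parity of the weight of $\mathbf{o}$ and writes the two paradoxes explicitly, whereas you encode both cases in a single per-site relabelling via $w_j$ and defer the parity case split to the final diagonal check.
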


\begin{proof}
Proposition \ref{prop:hghz} showed that this holds for $n=3$. Let $\mathbf{o} = (o_1, \dots, o_n)$ be any binary string of length $n$, let $\gamma_i$ be the function that changes the $i$th entry of a binary string, and let $\mathbf{o}^{-1}$ denote the binary string of length $n$ which differs in every entry from $\mathbf{o}$. We show that every outcome $\mathbf{o}$ to the measurements $(Y,\dots,Y)$ witnesses a Hardy paradox. We deal with the cases that $\mathbf{o}$ has an even or an odd number of $1$'s separately.

Suppose $\mathbf{o}$ has an even number of $1$'s:
$p\left(\mathbf{o}\mid Y, \dots, Y\right) >0$,
since there are an odd number of $Y$ measurements;
$p\left(\,\mathbf{o} \,\mid\, \pi(X,Y,\dots,Y) \,\right) = 0$, for all permutations $\pi$, since there are $2 \bmod 4$ $Y$'s and $\mathbf{o}$ has an even number of $1$'s;
$p\left(\mathbf{o}^{-1}\mid X, \dots, X \right) = 0$, since there are $0 \bmod 4$ $Y$'s and $\mathbf{o}^{-1}$ has an odd number of $1$'s.

Suppose $\mathbf{o}$ has an odd number of $1$'s:
$p\left(\mathbf{o}\mid Y, \dots, Y\right) >0$,
since there are an odd number of $Y$ measurements;
$p\left(\,\gamma_i(\mathbf{o})\,\mid\,\gamma_i(Y,Y,\dots,Y) \,\right) = 0$, for all permutations $i = 1,\dots, n$, since there are $2 \bmod 4$ $Y$'s and an even number of $1$'s in $\gamma_i (\mathbf{o})$;
$p\left(\mathbf{o}\mid X, \dots, X \right) = 0$, since there are $0 \bmod 4$ $Y$'s and an odd number of $1$'s in $\mathbf{o}$.
\end{proof}

It should be pointed out that even though we can say that Hardy nonlocality will can be witnessed wtih certainty in all of these models, the paradoxical probabilities for the individual Hardy paradoxes are always $\sfrac{1}{2^n}$, with the maximum obtained for the tripartite GHZ model.

Such a result does not hold for GHZ($n$) models for which $n \neq 3 \bmod 4$, as it can be seen that these models do not contain $n$-partite Hardy paradoxes. This follows from the fact that any $(n,2,2)$ Hardy paradox must take the form of one of the paradoxes in the proof of Proposition \ref{prop:ghznhardy}, but it can easily be checked that the counting arguments for identifying such paradoxes in GHZ($n$) models work if and only if $n=3 \bmod 4$.

\bibliographystyle{unsrt}
\bibliography{refs1}

\end{document}